\def\R{\mathbb R}
\def\x{\mathbf x}
\def\w{\omega}
\def\id{\textrm{Id}}
\newtheorem{theorem}{Theorem}
\newtheorem{lemma}{Lemma}
\newdefinition{rmk}{Remark}
\newdefinition{example}{Example}
\newproof{proof}{Proof}
\newproof{pot}{Proof of Theorem \ref{th:main}}
\begin{document}

\begin{frontmatter}

\title{Unstable gap solitons in inhomogeneous Schr\"odinger equations}

\author[label4]{R.\ Marangell}
\author[label2]{H.\ Susanto}
\author[label1]{C.K.R.T.\ Jones}
\address[label1]{Department of Mathematics, University of North Carolina at Chapel Hill,\\ Chapel Hill, NC 27599}
\address[label2]{School of Mathematical Sciences, University of Nottingham, University Park,\\ Nottingham NG7 2RD, UK}
\address[label4]{Department of Mathematics and Statistics, University of Sydney, \\ Sydney, NSW 2006 Australia}

\begin{abstract}
A periodically inhomogeneous Schr\"odinger equation is considered. The inhomogeneity is reflected through a non-uniform coefficient of the linear and non-linear term in the equation. Due to the periodic inhomogeneity of the linear term, the system may admit spectral bands. When the oscillation frequency of a localized solution resides in one of the finite band gaps, the solution is a gap soliton, characterized by the presence of infinitely many zeros in the spatial profile of the soliton. Recently, how to construct such gap solitons through a composite phase portrait is shown. By exploiting the phase-space method and combining it with the application of a topological argument
, it is shown that the instability of a gap soliton can be described by the phase portrait of the solution. Surface gap solitons 
at the interface between a periodic inhomogeneous and a homogeneous medium are also discussed. Numerical calculations are presented accompanying the analytical results. 
\end{abstract}

\end{frontmatter}

\section{Introduction}

A homogeneous nonlinear system may admit a localized solutions with a natural frequency residing in the first (semi-infinite) band-gap of the corresponding linear system. When there is a periodic non-uniformity in the linear system, additional finite band-gaps will be formed and the nonlinear system will admit a novel type of solitons known as the gap solitons \cite{denz09}. One main characteristic of a gap soliton is the infinitely many number of zeros in the profile of the solution, inheriting a characteristic of Bloch waves. Gap solitons are intensively studied among others in nonlinear optics \cite{acev00} and Bose-Einstein condensates \cite{kevr08}. Several reports on the experimental observation of gap solitons in the fields in the one-dimensional setting are, e.g., \cite{chen87,eggl96,eier04,mand03,mand04,rosb06}.

Depending on particular underlying assumptions and specific limits, gap solitons have been studied analytically through several different approaches. The first theoretical approach is through the coupled-mode theory, which is based on a decomposition of the wave field into a forward and backward propagating wave \cite{volo81,chen87,ster94}. The applicability and justification of the method can be seen in \cite{good01,peli07,peli08}. The stability of gap solitons in this approach have been studied analytically in \cite{malo94,ross98,bara98}. The second formal approximation to gap soliton is through the so-called tight-binding approximation, which leads to a discrete nonlinear Schr\"odinger equation (DNLS) \cite{kevr09}. In this approach, a gap soliton can be related to the 'ordinary' soliton through the so-called staggering transformation. The existence and the stability of discrete solitons in the uncoupled limit of this approach has been discussed in \cite{peli05}. The third analysis of gap solitons is based on the approximation when the eigenfrequency of the localized modes is close to one of the edges of the finite band gaps \cite{iizu94,iizu97,peli04,yang10}. In this case, the envelope of the gap solitons is described by the nonlinear Schr\"odinger equation. It is shown in \cite{peli04} that gap solitons at least suffer from an oscillatory instability because gap solitons possess internal modes.

Relatively recently, another analytical method was proposed by Kominis et al.\ \cite{komi06,komi06_2,komi07}, employing a phase space method for the construction of an analytical solitary wave. Even though the method is rather limited to piecewise-constant coefficients, it was shown that the method is effective in obtaining various types of localized modes belonging to gap solitons. For that new method, the stability result was so far only obtained through numerical simulations.

The phase-space method proposed in \cite{komi06,komi06_2,komi07} is similar to that used in our recent work \cite{rmckrtjhs10}, where it was shown that the profile of a solution in the phase-space can be used to describe its instability. The method was based on the topological argument developed in \cite{ckrtj88}. Here, we propose to apply a similar method to determine the stability of gap solitons obtained through the phase-space method \cite{komi06,komi06_2,komi07}. Despite the similarity in the proposed method in investigating the instability of gap solitons, the problem is nontrivial. The topological argument in \cite{ckrtj88} is so far immediately applicable to nonlinear systems with finite inhomogeneity (see \cite{rmckrtjhs10} and references therein). 
By specifically constructing the solutions, we show that the argument is also useful to study gap solitons. In addition to inhomogeneities occupying the infinite domain, the so-called surface gap solitons sitting at the interface between inhomogeneities in the semi-infinite domain and a homogeneous region \cite{rosb06,smir06,komi07} will also be studied. Our result will complement the numerical results on the stability of surface gap solitons recently studied, e.g., in \cite{dohn08,blan11}.

The paper is outlined as the following. In Section 2, the governing equations are discussed and the corresponding linear eigenvalue problem is derived. The construction of gap solitons using the phase-space method is briefly explained. The instability of gap solitons is studied analytically in Section 3 using the topological argument. In Section 4, the linear eigenvalue problem for several gap solitons is solved numerically, where an agreement between the analytical results presented in the previous section is obtained. In the same section, the instability of surface gap solitons is also discussed. We conclude the paper in Section 5.

\section{Mathematical model}

We consider the following governing system of differential equations
\begin{equation}\begin{array}{lll}
i\Psi_{t}+\Psi_{xx}+|\Psi|^2\Psi=V\Psi && x \in U_O := \R \setminus U_I\\
i\Psi_{t}+\Psi_{xx}-\eta |\Psi|^2\Psi=0 && x \in U_I
\end{array}
\label{gov1}
\end{equation}
where the `outer' equation has focusing type nonlinearity, the `inner' equation can be defocusing ($\eta>0$) or linear ($\eta=0$), and $U_O, U_I$ are disjoint sets of intervals to be specified later.

To study standing waves of (\ref{gov1}), we pass to a rotating
frame and consider solutions of the form $\Psi(x,t) = e^{-i \w t} \psi(x,t)$. We then have
\begin{equation}\begin{array}{lll}
i\psi_{t}+\psi_{xx}+|\psi|^2\psi=(V-\w) \psi&& x \in U_O,\\
i\psi_{t}+\psi_{xx}-\eta |\psi|^2\psi=-\w \psi && x \in U_I.
\end{array}
\label{gov2}
\end{equation}
Standing wave solutions of (\ref{gov1}) will be steady-state solutions to (\ref{gov2}). 
We consider real, $t$ independent solutions $u(x)$ to the ODE:
\begin{equation}\begin{array}{ccccc}
u_{xx} &=& (V-\w )u - u^3 & & x \in U_O, \\
u_{xx} &=& -\w u + \eta u^3 & & x \in U_I.
\end{array} \label{stat1}\end{equation}
\noindent To obtain solutions that decay to 0 as $x \to \pm \infty$, the condition that $V-\w > 0$ is required, with $\w \in \R_+$. We will also require that $u_x \to 0$ as $x \to \pm \infty$. To establish the instability of a standing wave solution we linearize (\ref{gov2}) about a solution to (\ref{stat1}). Writing $\psi=u(x) + \epsilon\left((r(x)+is(x))e^{\lambda t} + (r(x)^\star+is(x)^\star) e^{\lambda^\star t}\right)$ and retaining terms linear in $\epsilon$ leads to the eigenvalue problem
\begin{equation}
\lambda\left(\begin{array}{cc} r \\ s \end{array}\right) = \left(\begin{array}{cc} 0 & D_{-} \\ - D_{+} & 0 \end{array}\right) \left(\begin{array}{cc} r \\ s \end{array}\right)  = M \left(\begin{array}{cc} r \\ s \end{array}\right),
\label{eq:linear}
\end{equation}
where the linear operators $D_{+}$ and $D_{-}$ are defined as
\begin{eqnarray}
\begin{array}{lll}
D_{+} = \begin{array}{lll}
   \frac{\partial^2}{\partial x^2} - (V- \w ) + 3u^2, & x \in U_O, \\
   \frac{\partial^2}{\partial x^2} + \w - 3\eta u^2, & x \in U_I,
  \end{array}
\end{array}
\label{eq:plusoperator}\\
\begin{array}{lll}
D_{-} = \begin{array}{lll}
   \frac{\partial^2}{\partial x^2} - (V- \w ) + u^2, & x \in U_O, \\
   \frac{\partial^2}{\partial x^2} + \w - \eta u^2, & x \in U_I.
  \end{array}
\end{array}
\label{eq:minusoperator}
\end{eqnarray}
It is then clear that the presence of an eigenvalue of $M$ with positive real part implies instability.

In \cite{komi06} a gap soliton was constructed via a method of superimposing the phase portraits of the `outer' system:
\begin{equation}
u_x = y, \qquad
y_x = (V-\omega)u-u^3,
\label{eq:outer}
\end{equation}
and the `inner' one:
\begin{equation}
u_x = y, \qquad
y_x = -\omega u+\eta u^3.
\label{eq:inner}
\end{equation}
We can view the composite picture as a single, non-autonomous system with phase plane given by:
\begin{equation}
\begin{array}{lll}
u_x = y, \\
y_x = \left\{\begin{array}{lll}
   (V-\omega)u-u^3, & x \in U_O, \\
   -\omega u+\eta u^3, &x \in U_I.
  \end{array}\right.
\end{array}
\label{fiberdyn}
\end{equation}

In the phase plane of (\ref{eq:outer}), the outer system admits a soliton solution, given by the equation:
\begin{equation} \label{eq:homo}
y^2 = (V - \w) u^2 - \frac{u^4}{2},
\end{equation}
while solution curves of the inner system are given by
\begin{equation} \label{eq:innerC}
y^2 = - \w u^2 + \frac{\eta u^4}{2} + C.
\end{equation}
The inner system (\ref{eq:inner}) admits a heteroclinic orbit in the phase plane given by $C=\w^2/2$.
The solutions we are interested in will travel in the phase plane along the homoclinic orbit of the outer system described by (\ref
{eq:homo}) and then `flip' to the inner system as $x$ passes through $U_I$, and then `flip' back to the outer system along the homoclinic orbit, repeating the process for each of the components of $U_I$ (see \cite{komi06}).

\begin{figure}
\subfigure[]{\includegraphics[scale=.4]{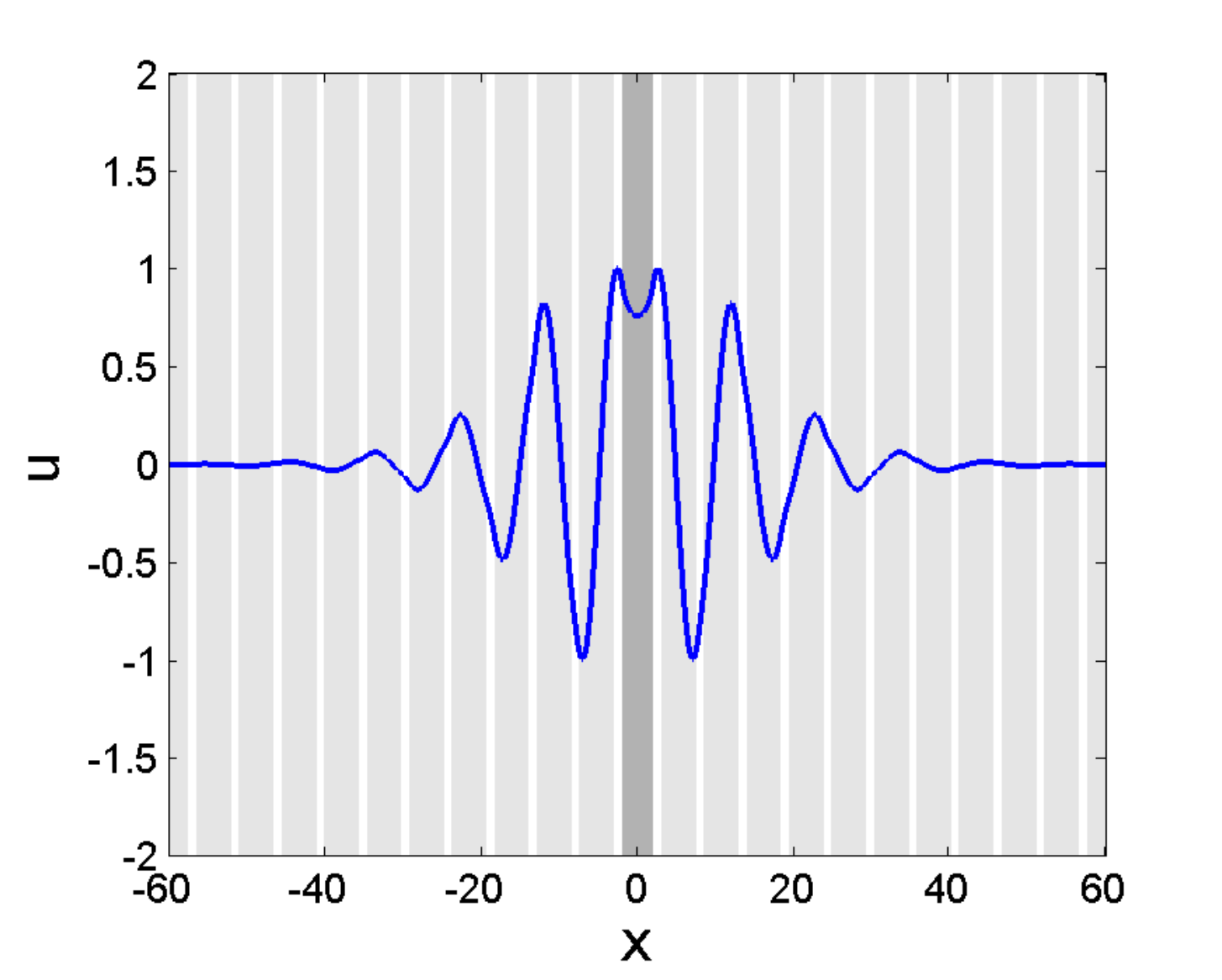}}
\subfigure[]{\includegraphics[scale=.4]{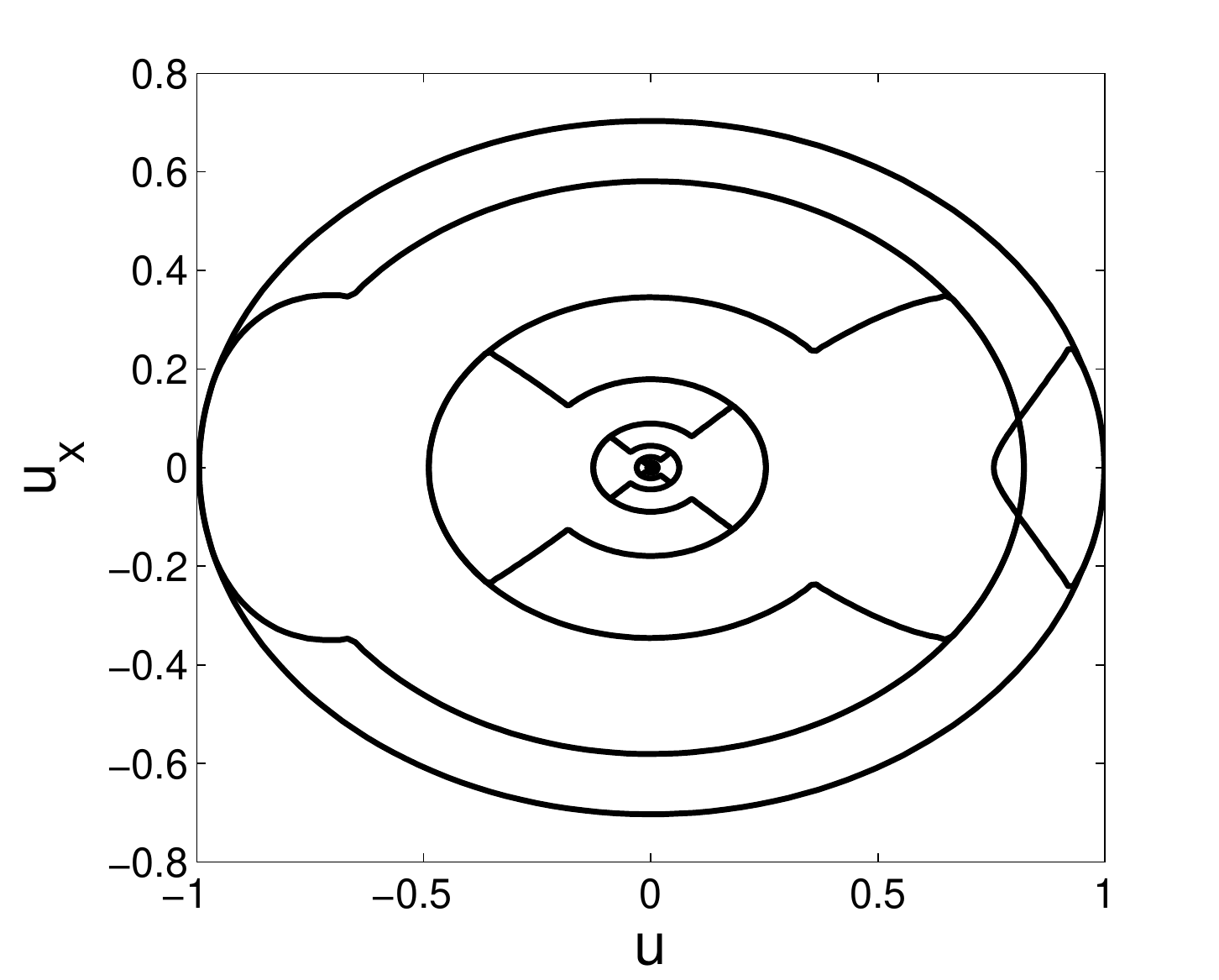}}
\caption{The plot of a gap soliton of (\ref{stat1}) in (a) the physical space, (b) the phase-space. The parameter values are explained in Section \ref{numeric}.}
\label{fig:gapsoliton}
\end{figure}

Let $U_S$ be the collection of intervals $U_S =[0,x_0)\cup (x_1,x_2) \cup (x_3,x_4) \ldots$. In the case of a gap soliton, $U_I = -U_S \cup
U_S$, and we have that the number of components of $U_I$ is infinite and the $x_i's$ are chosen so that the soliton travels from $(u_0,y_0)$
along the inner system to $(-u_0, -y_0)$. This is a key ingredient in the construction of the soliton, and will play a large role in establishing
instability. In \cite{komi06}, the inner system is linear, and the length of  the interval $(x_{2k} , x_{2k-1})$ can be determined as ${\pi}/{\sqrt{\w}}$. Here, we do not require that the inner system be linear, however we do require that the $x_i's$ be chosen so that if $i\geq 1$, the soliton travels from $(u_0,y_0)$ on the homoclinic orbit along the inner system to $(-u_0,-y_0)$, which is also on the homoclinic orbit.

In Figure \ref{fig:gapsoliton}, we plot an example of a gap soliton of the governing equation (\ref{gov1}) for parameter values that will be explained in Section \ref{numeric}. One can notice the main characteristic of gap solitons in the plot, which is the infinitely many zeros in the soliton profile.

\section{Instability Results}

To show instability of the standing waves, we will show that the matrix $M$ from above has a real positive eigenvalue. This is done by applying the main theorem of
\cite{ckrtj88}. In \cite{rmckrtjhs10}, systems like (\ref{gov1}) were considered with $U_I = (-L,L)$, for some real number $L$. One can show that the following quantities are well defined (see for example \cite{ckrtj88}, and the references therein):
\begin{eqnarray*}
P &=& \textrm{ the number of positive eigenvalues of } D_{+} \\
Q &=& \textrm{ the number of positive eigenvalues of } D_{-}.
\end{eqnarray*}
\noindent We then have the following:
\begin{theorem}[\cite{ckrtj88}] If $P-Q \neq 0, 1,$ there is a real positive eigenvalue of the operator $M$.  \label{th:ckrtj88}
\end{theorem}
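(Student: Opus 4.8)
The plan is to re-derive the counting mechanism of \cite{ckrtj88}, organizing it around three ingredients: a reduction of the eigenvalue problem for $M$ to a scalar problem for the product $D_+D_-$, a symplectic shooting argument that assigns a topological (winding-number/Maslov) index to the associated flow, and a parity argument that turns an index count into the existence of a \emph{real} eigenvalue.

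First I would reduce the $2\times2$ system. Writing out $M(r,s)^{\mathsf T}=\lambda(r,s)^{\mathsf T}$ gives $D_- s=\lambda r$ and $-D_+ r=\lambda s$, so for $\lambda\neq0$ one has $D_+D_- s=-\lambda^2 s$. Hence a real positive eigenvalue $\lambda$ of $M$ corresponds exactly to a negative eigenvalue $-\lambda^2$ of the (non-self-adjoint) product $D_+D_-$. Because $D_+$ and $D_-$ are Schr\"odinger operators whose potentials tend to the positive constant $V-\w$ as $x\to\pm\infty$, the essential spectrum of $D_+D_-$ is bounded below by $(V-\w)^2>0$; any such negative eigenvalue is therefore isolated, of finite multiplicity, and finite in number. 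The quantities $P$ and $Q$ enter through Sturm oscillation theory: the number of positive eigenvalues of each scalar operator equals the number of sign changes of its decaying solution, which is precisely the datum the shooting argument reads off.

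Second I would set up the shooting picture. Rewriting $(M-\lambda)w=0$ as a first-order system in $\R^4$ with coordinates $(r,r_x,s,s_x)$, the asymptotic coefficient matrix at $x\to\pm\infty$ is hyperbolic for every $\lambda>0$, with a two-dimensional unstable subspace at $-\infty$ and a two-dimensional stable subspace at $+\infty$. Let $E^u_-(\lambda,x)$ and $E^s_+(\lambda,x)$ be the corresponding solution subspaces in $\Gr(2,4)$; an eigenvalue occurs precisely when $E^u_-(\lambda,0)\cap E^s_+(\lambda,0)\neq\{0\}$. The Hamiltonian structure of the linearization endows $\R^4$ with a symplectic form for which these subspaces are Lagrangian, so the path $x\mapsto E^u_-(\lambda,x)$ carries a well-defined index relative to the reference $E^s_+$, counting eigenvalue crossings with sign. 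The core identity of \cite{ckrtj88} is that at $\lambda=0$, where the system decouples into the $D_+$-block and the $D_-$-block, this index evaluates to a quantity read off from $P$ and $Q$, while a homotopy in $\lambda$ from $0^+$ to $+\infty$ (along which no eigenvalue enters from infinity) transports that value into the count of positive real eigenvalues of $M$.

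Finally I would close with a parity argument: since $M$ is a real operator its spectrum is invariant under $\lambda\mapsto\bar\lambda$ and $\lambda\mapsto-\lambda$, so genuinely complex unstable eigenvalues occur in quadruples and contribute evenly, whereas only real pairs $\pm\lambda$ can produce an odd contribution; once the index computed from $P-Q$ is shown to be incompatible with an all-complex spectrum, a real positive eigenvalue is forced. The hard part — and the source of the precise excluded set $\{0,1\}$ rather than just $\{0\}$ — is the bookkeeping at $\lambda=0$, where hyperbolicity degenerates owing to the kernel elements coming from translation ($D_+u_x=0$) and gauge invariance ($D_- u=0$). These zero modes shift the crossing count by one, and tracking that shift, together with the monotonicity (direction) of the crossings through $\lambda=0$, is the delicate step; it is exactly this $\pm1$ correction that removes the case $P-Q=1$ from the conclusion.
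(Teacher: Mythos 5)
First, a point of reference: the paper offers no proof of this statement. It is quoted verbatim from \cite{ckrtj88} and used as a black box; the machinery behind it only resurfaces, repurposed, in the proof of Theorem \ref{th:main}. So your proposal can only be compared against Jones's original argument (as partially re-exposed later in the paper). Your outline does capture the correct skeleton of that argument: write the eigenvalue equation as a Hamiltonian system on $\R^4$, observe that the transported unstable subspace from $-\infty$ and the stable subspace at $+\infty$ are Lagrangian, attach a winding (Maslov-type) index to the curve of Lagrangian planes, evaluate it at $\lambda=0$ where the system decouples into the $D_+$ and $D_-$ blocks (this is where $P$ and $Q$ enter via Sturm oscillation theory), and compare with large $\lambda$, where the index is trivial.

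As a proof, however, the proposal has genuine gaps. The entire content of the theorem lives in the two steps you merely name: that the index at $\lambda=0$ evaluates to a quantity determined by $P-Q$ (with the excluded set $\{0,1\}$ arising from a residual endpoint ambiguity --- in the paper's later notation, the sign ambiguity in the lift $\tilde{\mu}_{+}=(\pm\theta_-,\theta_-+2\pi k)$), and that the index can only change at real eigenvalues as $\lambda$ is swept from $0^+$ to $+\infty$. Both are asserted rather than argued, and your own phrase ``tracking that shift is the delicate step'' is an acknowledgement of the gap rather than a closure of it. Two further points. The opening reduction to $D_+D_-s=-\lambda^2 s$ is never used afterwards and is not part of Jones's argument; it is harmless but inert. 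More seriously, the closing parity argument is a non sequitur in this setup: since the homotopy runs along the real $\lambda$-axis, a nonzero net change of the Lagrangian intersection index directly forces a real positive eigenvalue, with no need to count complex quadruples; conversely, the index as you have defined it does not detect non-real eigenvalues at all, so the claimed ``incompatibility with an all-complex spectrum'' does not follow from it. To turn the road map into a proof, the index computation at $\lambda=0$ and the constancy of the index away from real eigenvalues are the two things that must actually be carried out.
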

From Sturm-Liouville theory, $P$ and $Q$ can be determined by considering solutions of $D_+ v = 0$ and $D_- v = 0$, respectively. In fact, they are the number of
zeros of the associated solution $v$. Notice that $D_- v = 0$ is actually satisfied by the standing wave itself, and that $D_+ v = 0$ is the equation of variations of the
standing wave equation. It follows that:
\begin{equation} \label{eq:pandq}
\begin{array}{lll}
&&Q = \textrm{ the number of zeros of the standing wave } u. \\
&&P = \textrm{ the number of zeros of a solution to the variational equation along $u$. }
\end{array}
\end{equation}

For gap solitons, it is not immediately clear how to apply Theorem \ref{th:ckrtj88} above as in this case, both, $P$ and  $Q\to\infty$. The idea presented in this paper is to build an approximation to a gap soliton using more and more intervals of $U_I$ for which the quantity $P-Q$ remains constant. To this end define $S_0 = [0, x_0) $ and $S_n = [0,x_0) \cup (x_1,x_2) \cup (x_3, x_4) \cup \ldots (x_{4n-1},x_{4_n})$, where $(x_i, x_{i+1}) \subseteq U_S$. Thus $S_n$ adds two more components for each $n$. Then we can define $U_n = -S_n \cup S_n$, and we let $f_n$ be a solution to the ODE
\begin{equation}\begin{array}{ccccc}
f_{xx} &=& (V-\w )f - f^3, & & x \in \R \setminus U_n, \\
f_{xx} &=& -\w f + \eta f^3, & & x \in U_n.
\end{array} \label{stat2}\end{equation}
Thus for example $f_0$ would be the solution to
\begin{equation}\begin{array}{ccccc}
f_{xx} &=& (V-\w )f - f^3, & & |x|\geq x_0, \\
f_{xx} &=& -\w f + \eta f^3, & & |x| < x_0,
\end{array} \label{stat3}\end{equation}
while $f_1$ would be a solution to
\begin{equation}\begin{array}{ccccc}
f_{xx} &=& (V-\w )f - f^3, && x \notin (-x_4,-x_3) \cup (-x_2,-x_1) \cup (-x_0, x_0) \cup (x_1,x_2) \cup (x_3,x_4)\\
f_{xx} &=& -\w f + \eta f^3, && x \in (-x_4,-x_3) \cup (-x_2,-x_1) \cup (-x_0, x_0) \cup (x_1,x_2) \cup (x_3,x_4).
\end{array} \label{stat4}\end{equation}
A gap soliton then can be realized as the limit of successive $f_n$'s (in a variety of norms, but in particular in the $L^2$ and $H^1$ norms). In Figure \ref{fig:successivegapsolitons} we present a plot of $f_n$, $n=0,1,2$, approximating the gap soliton in Figure \ref{fig:gapsoliton}.

\begin{figure}
\center
\subfigure[]{\includegraphics[scale=.4]{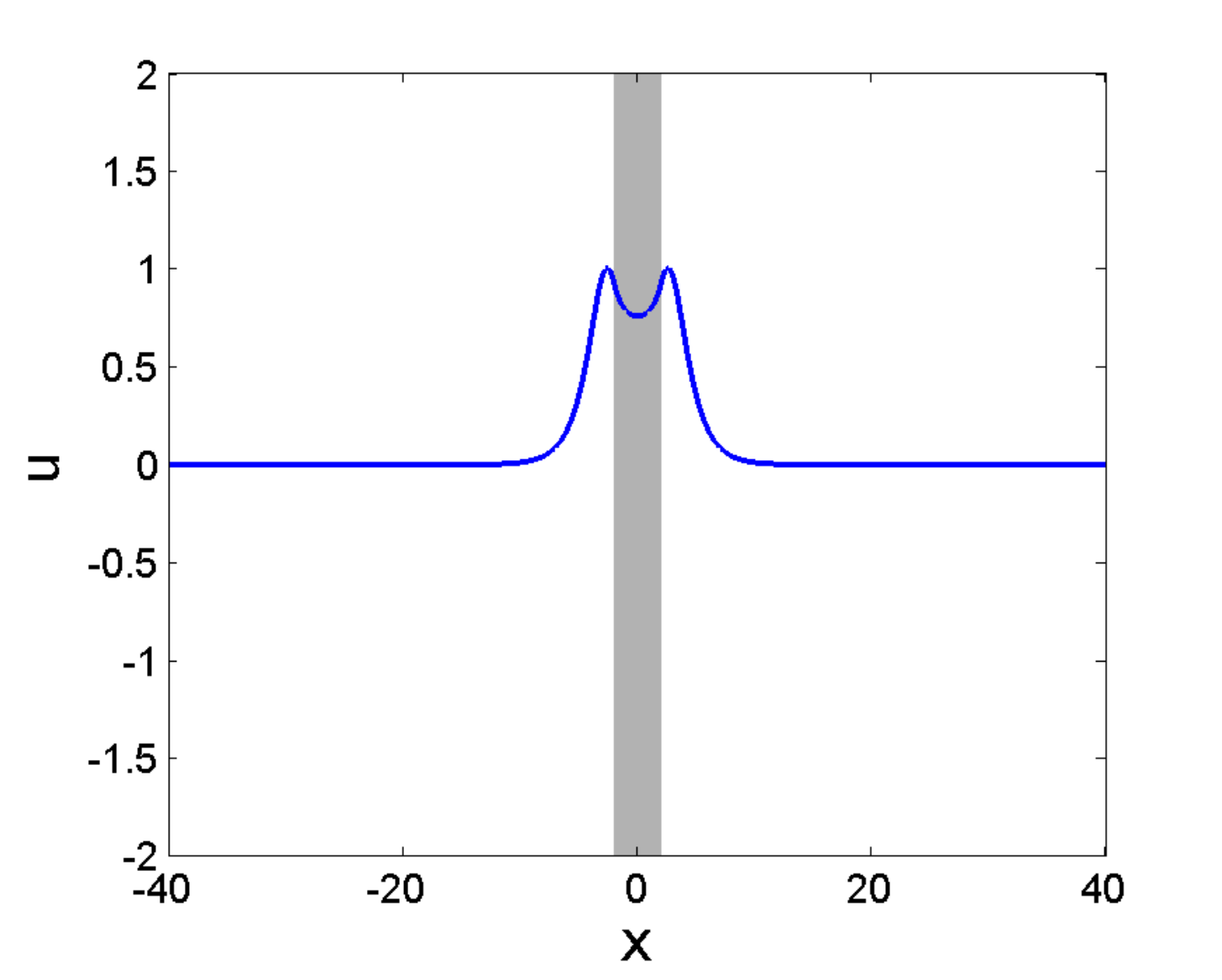}}
\subfigure[]{\includegraphics[scale=.4]{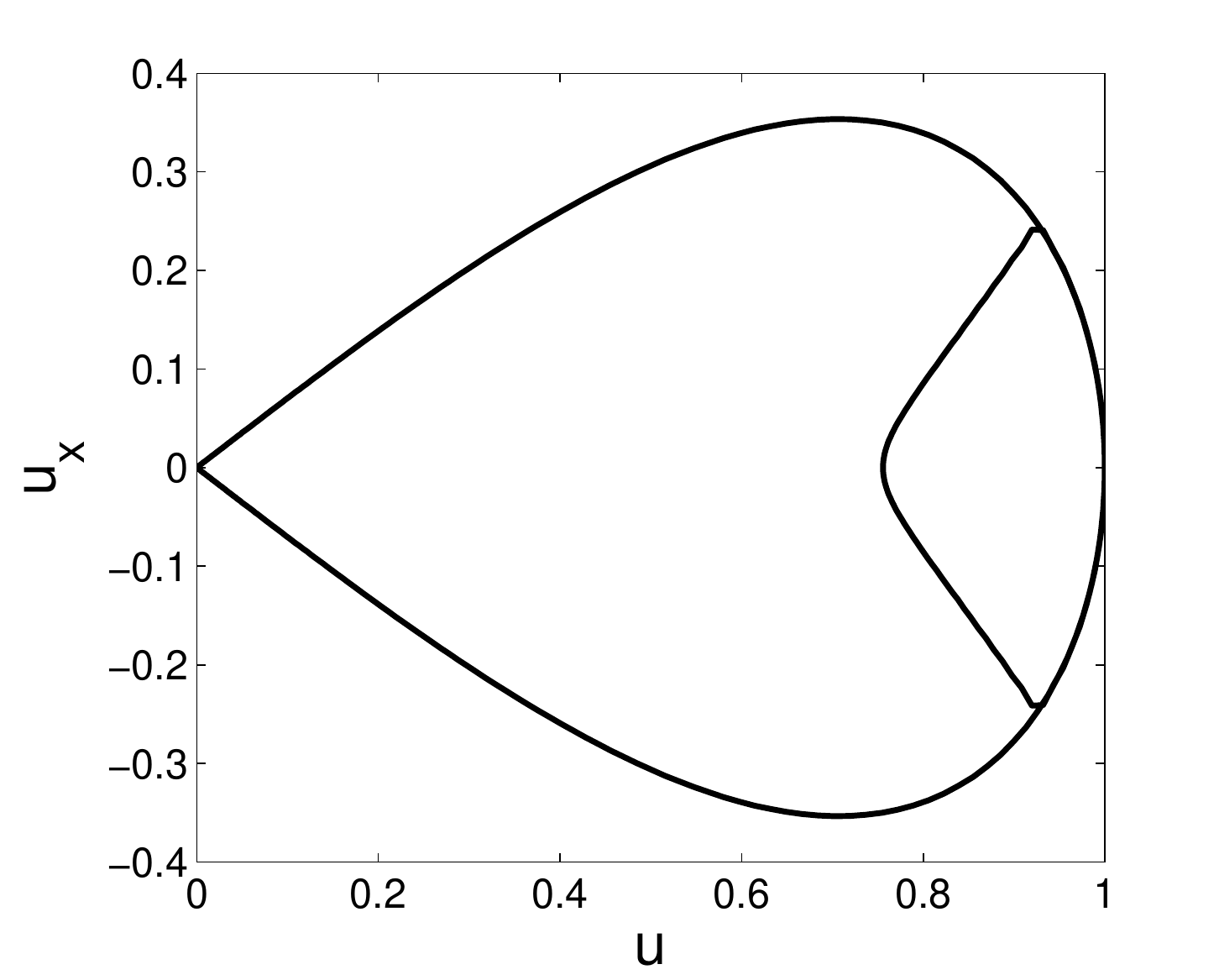}}
\subfigure[]{\includegraphics[scale=.4]{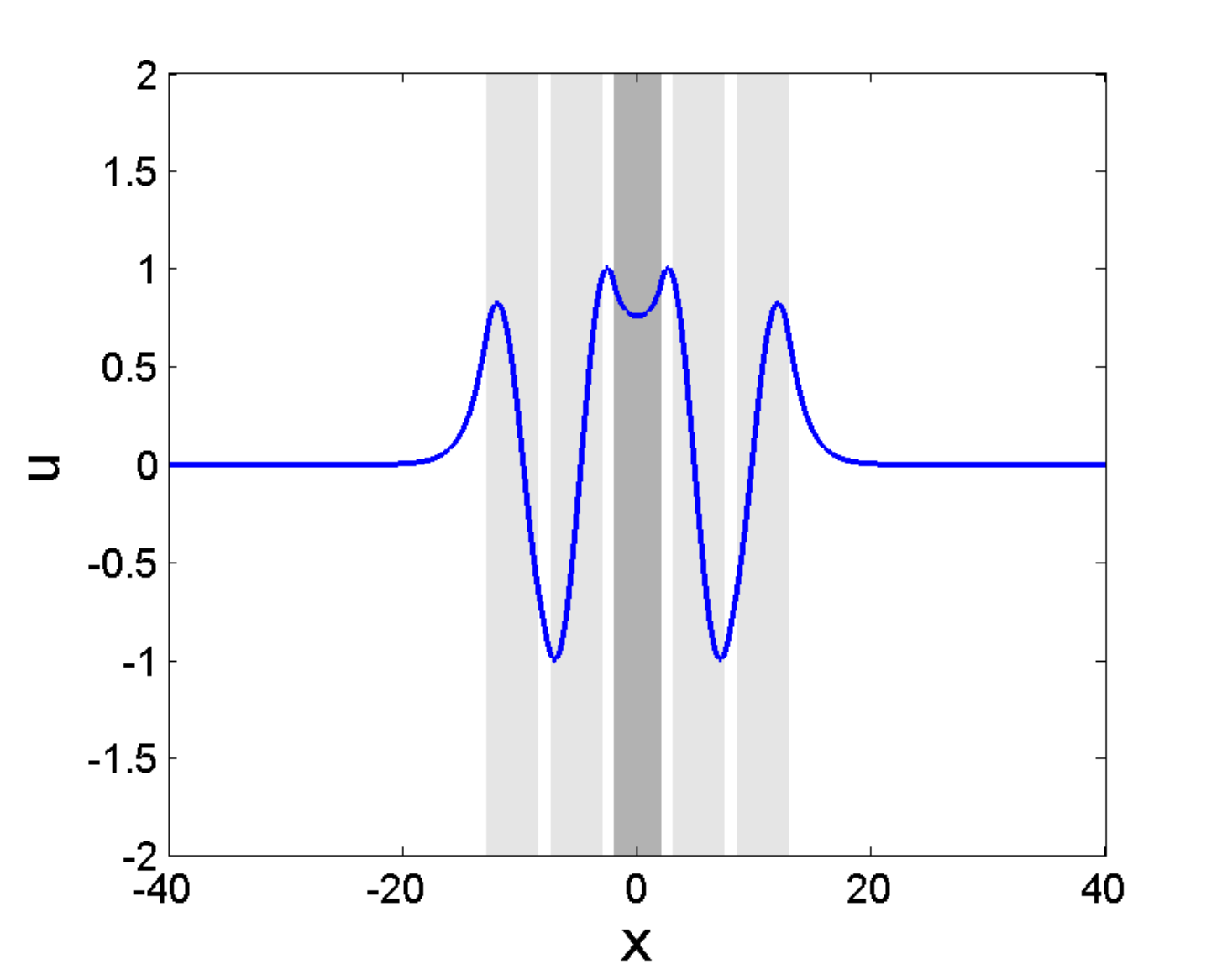}}
\subfigure[]{\includegraphics[scale=.4]{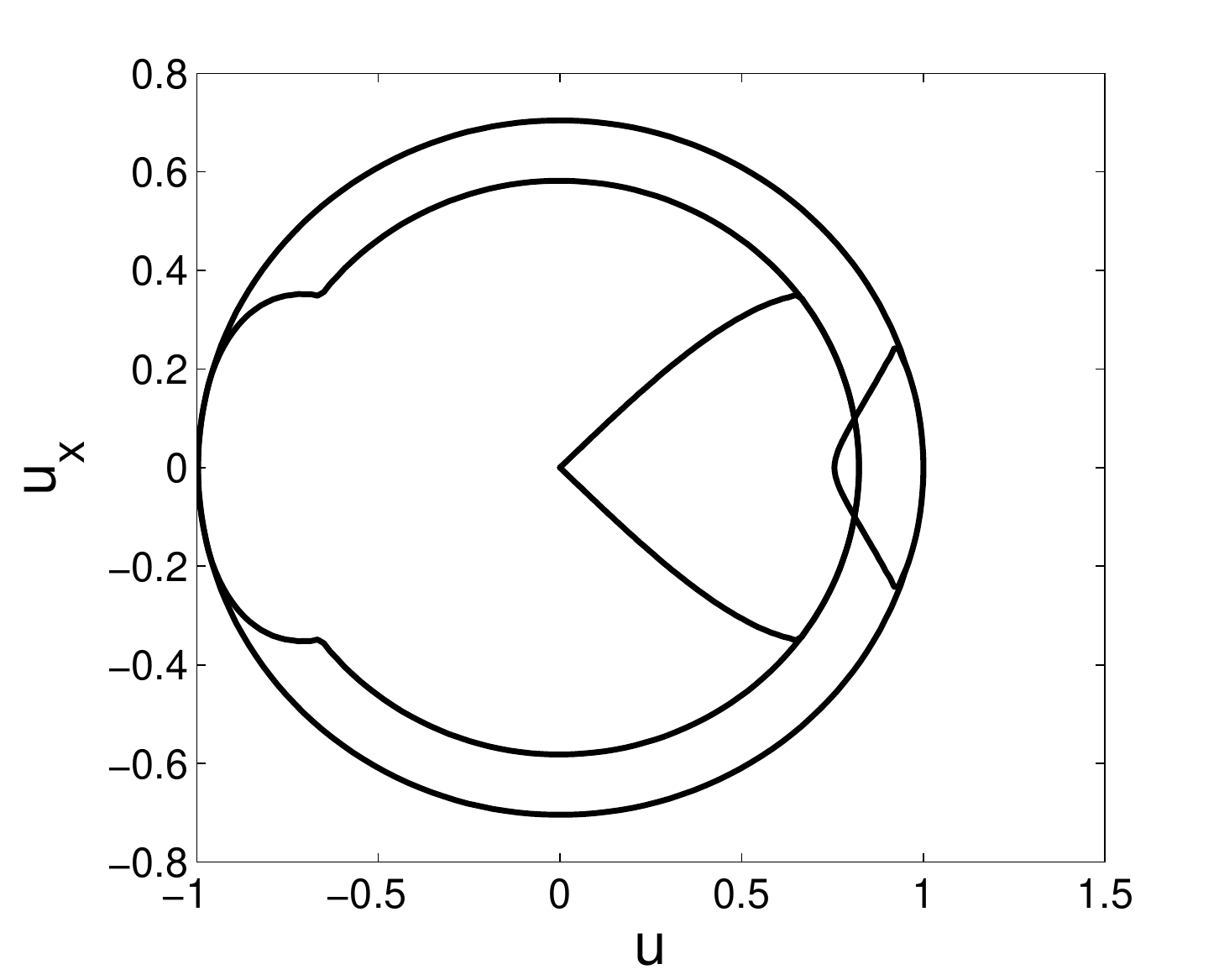}}
\subfigure[]{\includegraphics[scale=.4]{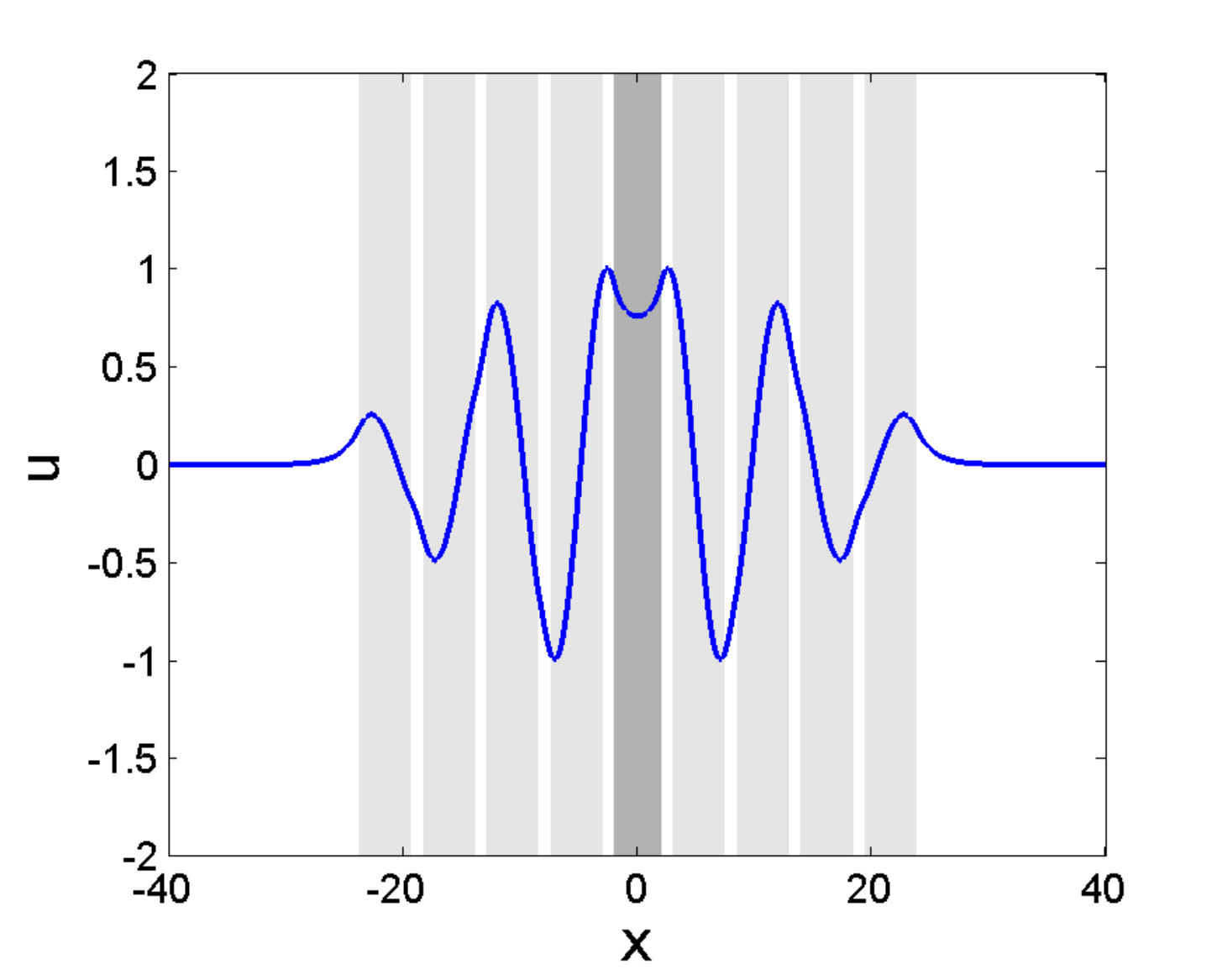}}
\subfigure[]{\includegraphics[scale=.4]{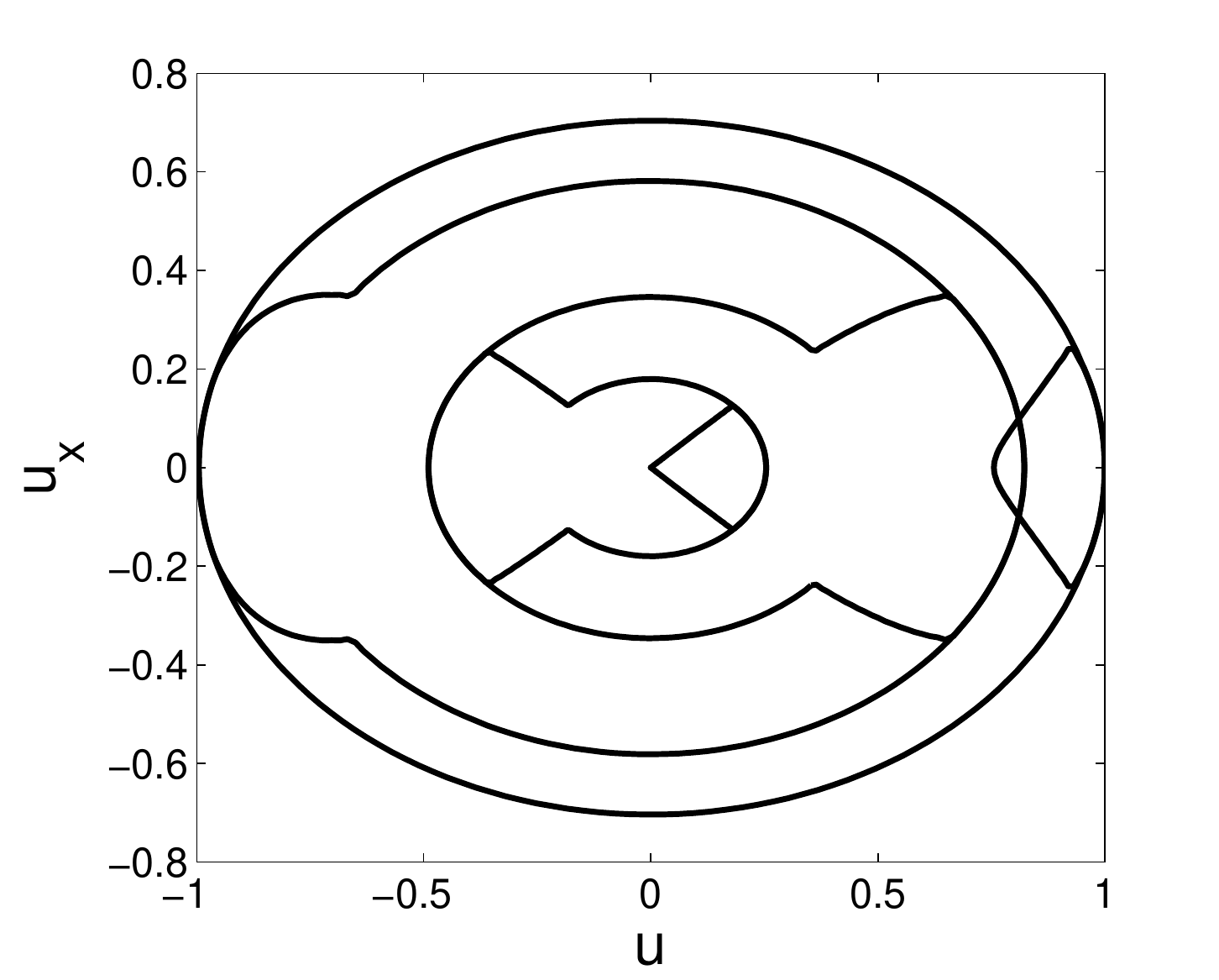}}
\caption{Successive approximations to a gap soliton in Figure \ref{fig:gapsoliton} in the physical space (a,c,e) and in the phase-space (b,d,f). The first, second and third row is respectively $f_0$, $f_1$, and $f_2$.}
\label{fig:successivegapsolitons}
\end{figure}

We have the following theorem
\begin{theorem}\label{th:main}
The quantity $P-Q$ is the same for all $f_i$ described above. Thus if $f_0$ is unstable then so is $f_n$ for all $n$. Further, if $f_0$ is unstable, then so is $f$, the gap soliton, corresponding to the limit.
\end{theorem}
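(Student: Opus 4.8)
The plan is to split the statement into two parts: (a) $P-Q$ is unchanged when one passes from $f_n$ to $f_{n+1}$, and (b) a positive real eigenvalue of the stability operator persists in the limit $f_n\to f$. Granting (a), invariance of $P-Q$ together with Theorem~\ref{th:ckrtj88} propagates the instability: the instability of $f_0$ is established in this framework precisely by checking $P_0-Q_0\neq 0,1$, so $P_n-Q_n\neq 0,1$ for every $n$ and each $f_n$ is unstable; then (b) carries the instability to the gap soliton $f$ itself.

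For (a) I would first compute $Q$. By (\ref{eq:pandq}), $Q_n$ is the number of zeros of $f_n$, and the phase-plane construction makes this transparent: each inner interval of $U_n$ contributes exactly one zero, since there the orbit flips from $(u_0,y_0)$ on one homoclinic lobe to $(-u_0,-y_0)$ on the other and so crosses $u=0$ once, while on each outer segment the orbit stays on a single lobe and $u$ keeps its sign. Because $U_{n+1}$ has four more inner intervals than $U_n$ (two on each side), $Q_{n+1}=Q_n+4$.

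The crux is to show $P_{n+1}=P_n+4$, where $P_n$ counts the zeros of the solution of the variational equation $D_+v=0$ along $f_n$. The difficulty is that, unlike $f$ (which solves $D_-f=0$ on all of \R), the decaying solution of $D_+v=0$ depends on $f_n$ globally, so it changes even on the block where $f_{n+1}\equiv f_n$; and $f_x$ solves $D_+v=0$ only piecewise, its slope $f_{xx}$ jumping at the interval endpoints, so one cannot simply use it. To defeat the global dependence I would exploit that $f_n$ is even or odd (the equations and $U_n$ are symmetric about $x=0$), so the potentials in $D_\pm$ are even. Both $P$ and $Q$ can then be read off as nodal counts on $[0,\infty)$ of the zero-energy solutions pinned by the boundary condition at the center and integrated outward; since $f_{n+1}\equiv f_n$ on $[-x_{4n},x_{4n}]$, these solutions are unchanged on $[0,x_{4n}]$, and the whole comparison localizes to the added tail $(x_{4n},\infty)$. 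On that tail $f_{n+1}$ performs two extra flips, and the remaining task is to prove that the genuine variational solution picks up exactly one new zero per flip, and none on the intervening outer segments. This is the hardest step; I would attack it with a Sturm comparison against $f_x$ region by region, using that $f_x$ is an explicit local solution whose rotation in the $(v,v_x)$-plane is known, to pin the tail increment of $P$ to that of $Q$. Combined with the previous paragraph this gives $P_{n+1}-Q_{n+1}=P_n-Q_n$.

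For (b) I would pass to the limit using $f_n\to f$ in $H^1$. All the $f_n$ and $f$ decay to $0$ and share the same constant-coefficient behavior at $\pm\infty$, so $M$ and every $M_n$ have the same essential spectrum on the imaginary axis, and the real unstable eigenvalue $\lambda_n>0$ supplied by Theorem~\ref{th:ckrtj88} sits in a fixed compact subset of the open right half-plane, away from that essential spectrum. Continuous dependence of the Evans function on the coefficient $f_n^2\to f^2$ then forces, by Rouch\'e, a zero of the limiting Evans function in the same region, i.e.\ a positive real eigenvalue of $M$, establishing instability of the gap soliton. I expect the delicate point here to be the a priori confinement of $\lambda_n$ — bounded above, and bounded away from both $0$ and the essential spectrum — uniformly in $n$.
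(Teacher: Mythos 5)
Your decomposition into (a) invariance of $P-Q$ under $f_n\mapsto f_{n+1}$ and (b) passage to the limit matches the architecture of the paper's proof, and your count $Q_{n+1}=Q_n+4$ is correct. But both halves of your argument stop short of the actual content. In (a), the entire difficulty is concentrated in the step you defer — showing that the solution of $D_+v=0$ gains exactly one zero per inner flip and none extra on the outer segments — and your proposed Sturm comparison against $f_x$ is only a plan, complicated by the very obstruction you identify (the derivative of $f_x$ jumps at the matching points, so it is not a global solution of $D_+v=0$). The paper resolves this with Lemma \ref{lem:main}: the monodromy $B$ of the variational flow across a single inner interval, viewed as a map from the tangent plane at $(u_0,y_0)$ to the tangent plane at $(-u_0,-y_0)$, equals $-\id$. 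This follows from three soft facts: the tangent vector to the inner orbit is carried to its negative (because the orbit runs from $(u_0,y_0)$ to $(-u_0,-y_0)$), $B$ preserves orientation, and $B^2=\id$ since the traversal is exactly half the period of the closed inner orbit — this last point is where the specific choice of the $x_i$'s enters. Each $-\id$ is a rotation by $\pi$, i.e.\ one passage through the vertical, and the two inner detours compose with the outer flow (using $a_0=-a_1$, $a_2=-a_3$ and the odd symmetry of the outer field) to give the same rotation mod $2\pi$ as the pure outer flow. Without some version of this monodromy computation your nodal count for $P$ is unproven.

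For (b) your route is genuinely different from the paper's and, as you yourself flag, has a gap that is not merely technical. An Evans-function/Rouch\'e argument needs the unstable eigenvalues $\lambda_n$ of $M_n$ to be confined, uniformly in $n$, to a compact set in the open right half-plane bounded away from $0$; nothing in Theorem \ref{th:ckrtj88} supplies such a lower bound, and a priori $\lambda_n$ could drift to the origin as $n\to\infty$, in which case Rouch\'e yields nothing for $f$. The paper does not track eigenvalues through the limit at all. Instead it associates to each $f_n$ and to $f$ a curve $\gamma_n$, $\gamma$ in the space of Lagrangian planes $\Lambda(2)$, lying on a torus with explicit angular coordinates $(\theta_1,\theta_2)$; Lemma \ref{lem:main} forces all the lifted endpoints $\tilde{\mu}_{+,n}$ to coincide, pointwise convergence $\gamma_n\to\gamma$ plus compactness of the torus forces $\tilde{\mu}_+$ to be that same point, and then the criterion of \cite{ckrtj88} ($|k|\neq 0,1$ for the winding $k=P-Q$) is applied directly to the limit curve $\gamma$ to conclude that $f$ itself has a positive real eigenvalue. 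If you want to salvage your eigenvalue-continuation approach you would have to prove the uniform spectral confinement; the topological argument is precisely the device the paper uses to avoid having to do so.
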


The key idea is to use the interpretation of $P$ and $Q$ given in (\ref{eq:pandq}) as the number of zeros of the solution $f$ and the number of zeros of the solution to the variational equation along $f$, for the partial solution defined on $(x_i, x_{i+4})$, to the ODE below:
\begin{equation}\begin{array}{ccccc} 
f_{xx} &=& (V-\w )f - f^3, & & x \in (x_{i+1},x_{i+2}) \cup (x_{i+3}, x_{i+4}) \\
f_{xx} &=& -\w f + \eta f^3, & & x \in (x_i,x_{i+1}) \cup (x_{i+2},x_{i+3}).
\end{array} \label{stat5}
\end{equation}

\begin{figure}
\center \includegraphics[scale=0.8]{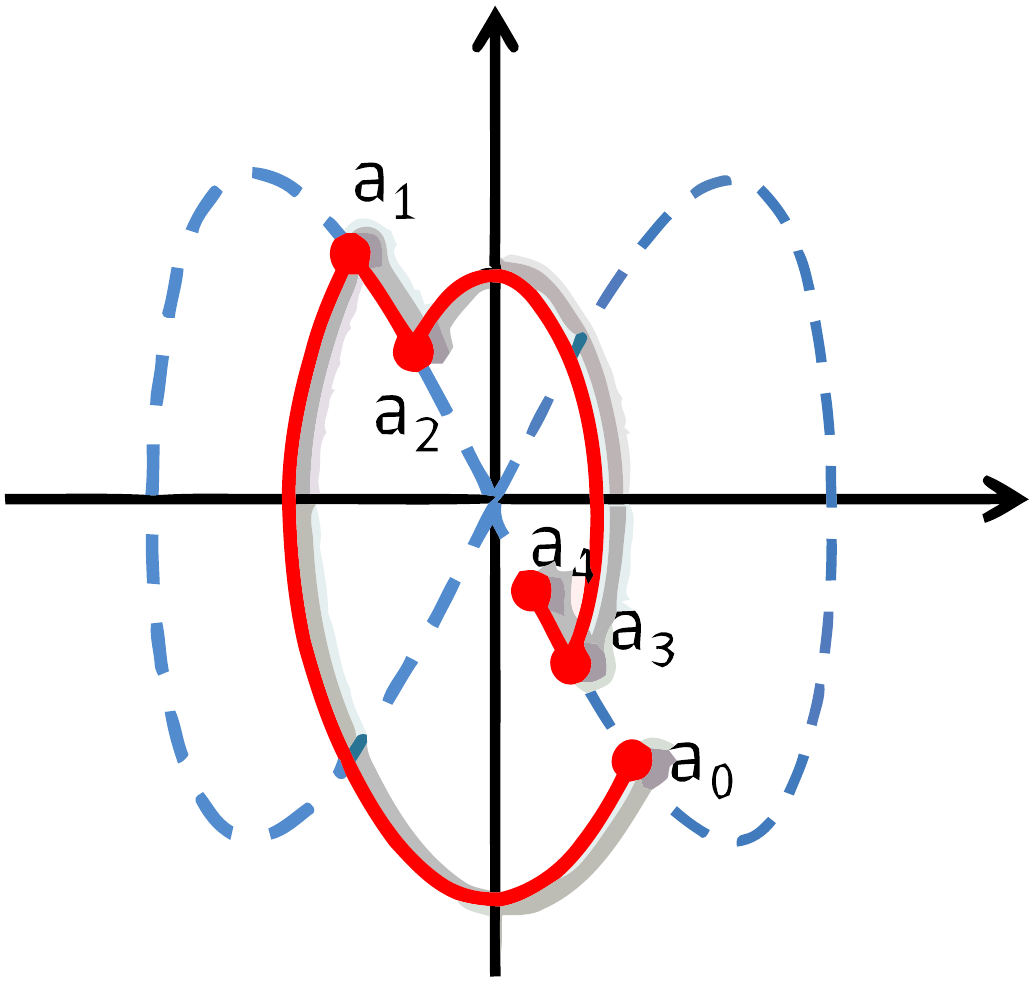}
\caption{A sketch of a phase portrait of the partial solution to equation (\ref{stat5}). The points $a_i$ correspond to the points $(f(x_{i-1}),f_x(x_{i-1}))$ in the phase plane.}
\label{fig:gapdetour}
\end{figure}

The number $Q$ is straight forward to calculate. We make the geometric observation as in \cite{rmckrtjhs10} that $P$, the number of zeros of a solution to the equation of variations along $f$, can be found by determining the number of times that a vector must pass through the vertical as the base point ranges over the
entire orbit. It turns out that for the solution of (\ref{stat5}) defined above, the rotation of a vector by the equation of variations is the same (mod $2 \pi)$ as if the base point had traveled along only the outer homoclinic orbit.
\begin{example}\label{ex:lin}
To better illustrate this last point, we first consider the case when both the inner system and the outer systems are linear. That is we have the following systems of linear, constant coefficient equations
\begin{eqnarray}
\begin{pmatrix} u \\ y \end{pmatrix}_x & = &
\begin{pmatrix} 0 & 1 \\ (V-\w) & 0 \end{pmatrix} \begin{pmatrix} u \\ y \end{pmatrix},\, \textrm{ when } x \in (x_{i+1}, x_{i+2}) \cup (x_{i+3},x_{i+4}) \label{eq:outersim} \\
& = & \begin{pmatrix} 0 & 1 \\ -\w & 0  \end{pmatrix} \begin{pmatrix} u \\ y \end{pmatrix},\, \textrm{ when } x \in (x_i, x_{i+1}) \cup
(x_{i+2},x_{i+3}) \label{eq:innersim}
\end{eqnarray}
The solution to the above equation can be written explicitly. Further, because we are in the linear case, we have that the equation of variations along a solution is the same as the equation itself (\ref{eq:outersim} \ref{eq:innersim}).

Being led by the geometry of the phase plane, we let $\Phi_1(a,b)$ denote a fundamental solution matrix to the equation of variations of the outer system of equations (\ref{eq:outersim}) along a solution to (\ref{eq:outersim}) which travels from point $a$ to point $b$ in the phase plane. That is let $(u(x),y(x))$ be a solution to (\ref{eq:outersim}), considered on the interval $(x_j,x_k)$. Then set $a := (u(x_j), y(x_j))$ and $b :=(u(x_k),y(x_k))$, and define $\Phi_1(a,b)$ to be a fundamental solution matrix of the equation of variations to the outer system, along the path $(u(x), y(x))$ with $x \in (x_j, x_k)$. 

Similarly, let $\Phi_2(a,b)$ be a fundamental solution matrix to the equation of variations
of the inner system (\ref{eq:innersim}), along a solution to (\ref{eq:innersim}) evolving from point $a$ to point $b$. We denote by $a_0, a_1, a_2, a_3$, the points in the phase plane of (\ref{eq:outersim}, \ref{eq:innersim}) where the
solutions switches between the two systems, and $a_4$ the point where we stop evolving (see Figure \ref{fig:gapdetour}), and we let $\begin{pmatrix} \zeta_0 \\ \xi_0 \end{pmatrix}$ be a pair of initial conditions in the tangent plane to $\R^2$ at the point
$a_0$. We have that a solution to the equation of variations along the orbit from $a_0$ to $a_1$ to $a_2$ to $a_3$ to $a_4$ can be described as
$$\Phi_1(a_3,a_4) \Phi_2(a_2,a_3)\Phi_1(a_1,a_2)\Phi_2(a_0,a_1) \begin{pmatrix} \zeta_0 \\ \xi_0 \end{pmatrix}.$$
It turns out that modulo $2 \pi$,
\begin{equation}\label{eq:rotid}
\Phi_1(a_3,a_4) \Phi_2(a_2,a_3)\Phi_1(a_1,a_2)\Phi_2(a_0,a_1) \begin{pmatrix} \zeta_0 \\ \xi_0 \end{pmatrix} = \Phi_1(a_0,a_4) \begin{pmatrix} \zeta_0 \\ \xi_0 \end{pmatrix}.
\end{equation}
\noindent The equality in equation (\ref{eq:rotid}) can be verified by solving the appropriate systems. Another way to see the effect is to
consider the following. As the base point evolves under equation (\ref{eq:outersim} \ref{eq:innersim}) from $a_i$ to $a_{i+1}$, we can consider the aggregate effect of a $\Phi_j(a_i, a_{i+1})$ on a
tangent vector $\begin{pmatrix} \zeta_0 \\ \xi_0 \end{pmatrix}$, as a linear map from $\R^2 \to \R^2$, by simply determining where a tangent
vector to $a_i$ gets sent to, when the base point is at $a_{i+1}$. That is we are considering $\Phi_j(a_i,a_{i+1})$ as a map between the tangent plane of $\R^2$ at the point $a_i$ to the tangent plane of $\R^2$ at the point $a_{i+1}$. This will give us the total rotation of a tangent vector modulo $2 \pi$ as we
travel from point $a_i$ to point $a_{i+1}$ along the orbit. The key observation is to realize that for $\Phi_2(a_0,a_1)$ and $\Phi_2(a_2,a_3)$, this will be negative the identity $- \id$. That is, viewing $\Phi_2(a_j,a_{j+1}) \quad j = 0,2$ as a map between tangent spaces of $\R^2$,  
$\Phi_2(a_j,a_{j+1}):T_{a_j}\R \to T_{a_{j+1}}\R \quad j = 0,2 $, we have $\Phi_2(a_j,a_{j+1}) = -\id$. Moreover, by considering $\Phi_2(a_j,a_{j+1})$ in this way, we are just measuring the effect of rotation by $\Phi_2(a_j,a_{j+1})$ on an initial tangent vector modulo $2 \pi$, and we have that
\begin{equation} \label{eq:negid}\begin{array}{lll}
&&\Phi_1(a_3,a_4) \Phi_2(a_2,a_3)\Phi_1(a_1,a_2)\Phi_2(a_0,a_1) \begin{pmatrix} \zeta_0 \\ \xi_0 \end{pmatrix}\\
&&=  (-\id)^2 \Phi_1(a_3,a_4) \Phi_1(a_1,a_2) \begin{pmatrix} \zeta_0 \\ \xi_0 \end{pmatrix} \\
&&=  \Phi_1(a_0, a_4) \begin{pmatrix} \zeta_0 \\ \xi_0 \end{pmatrix},\end{array}
\end{equation}
where the last equality follows from the facts that $a_0 = -a_1$, $a_2=-a_3$, the outer system of equations (\ref{eq:outersim}) is symmetric about the origin, and the group property of variational flows.
\end{example}
 We are now ready to state the main lemma used in the proof of theorem \ref{th:main}.

\begin{lemma}\label{lem:main} Redefine $\Phi_1(a,b)$ and $\Phi_2(a,b)$ as in the above example, but instead of using the linear ODE, let them be the fundamental solution matrices to the equations of variations along solutions to the inner and outer systems given in the nonlinear equation (\ref{stat5}):
\begin{equation*}\begin{array}{ccccc} 
f_{xx} &=& (V-\w )f - f^3, & & x \in (x_{i+1},x_{i+2}) \cup (x_{i+3}, x_{i+4}), \\
f_{xx} &=& -\w f + \eta f^3, & & x \in (x_i,x_{i+1}) \cup (x_{i+2},x_{i+3}).
\end{array} \end{equation*}
Likewise, let $a_j$ be defined analogously for the points in the phase plane of the nonlinear equation where the orbit switches between the inner and outer systems.  Also, let $\displaystyle \begin{pmatrix} \zeta_0 \\ \xi_0 \end{pmatrix}$ be an initial condition to the equation of variations along a solution to (\ref{stat5}) in the tangent plane to $\R^2$ at $a_0$. Then we have the following:
\begin{equation}\label{eq:rotidnl}
\Phi_1(a_3,a_4) \Phi_2(a_2,a_3)\Phi_1(a_1,a_2)\Phi_2(a_0,a_1) \begin{pmatrix} \zeta_0 \\ \xi_0 \end{pmatrix} = \Phi_1(a_0,a_4) \begin{pmatrix} \zeta_0 \\ \xi_0 \end{pmatrix}.
\end{equation}
\end{lemma}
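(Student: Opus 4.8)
The plan is to reduce (\ref{eq:rotidnl}) to two independent assertions, exactly as the structure of (\ref{eq:negid}) suggests, and to establish each \emph{as an honest identity of matrices}. Throughout I write $\sigma=-\id$ for the central involution of the phase plane and use that both the inner and the outer vector fields in (\ref{stat5}) are odd in $u$, hence $\sigma$-equivariant: their flows $\Psi^{\mathrm{out}}_x,\Psi^{\mathrm{in}}_x$ commute with $\sigma$. Differentiating $\Psi_x\circ\sigma=\sigma\circ\Psi_x$ and using that $\sigma$ is linear gives, for the associated variational flows, $\Phi_j(\sigma a,\sigma b)=\sigma\,\Phi_j(a,b)\,\sigma^{-1}=\Phi_j(a,b)$, since $\sigma=-\id$ is central. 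I also record the two geometric facts that drive the construction: $a_1=\sigma a_0$ and $a_2=\sigma a_3$ (the defining property of the ``flip''), and that equivariance then forces $a_3=\Psi^{\mathrm{out}}_t(a_0)$ for the same transit time $t$ that carries $a_1$ to $a_2$.

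First I would dispose of the outer factors. Applying the equivariance identity with $(a,b)=(a_0,a_3)$ gives $\Phi_1(a_1,a_2)=\Phi_1(\sigma a_0,\sigma a_3)=\Phi_1(a_0,a_3)$. Since $a_0,a_3,a_4$ lie, in this order, on a single orbit of the outer system (\ref{eq:homo}), the group property of variational flows yields $\Phi_1(a_3,a_4)\,\Phi_1(a_0,a_3)=\Phi_1(a_0,a_4)$. Combining the two,
\begin{equation*}
\Phi_1(a_3,a_4)\,\Phi_1(a_1,a_2)=\Phi_1(a_0,a_4),
\end{equation*}
verbatim as matrices. This is the nonlinear analogue of the final equality in (\ref{eq:negid}), and it holds exactly because it uses only oddness of the outer field and the group law.

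It then remains to show that the two inner detours cancel, i.e. $\Phi_2(a_2,a_3)\,\Phi_1(a_1,a_2)\,\Phi_2(a_0,a_1)=\Phi_1(a_1,a_2)$, for which it suffices to prove $\Phi_2(a_0,a_1)=\Phi_2(a_2,a_3)=-\id$. Two facts come for free. Because $a_1=\sigma a_0$, $\sigma$-equivariance closes the inner orbit through $a_0$ and makes the transit over $(x_i,x_{i+1})$ exactly half of its period; hence the tangent solution $f_x$ of the variational equation obeys $(f_x,f_{xx})(x_{i+1})=-(f_x,f_{xx})(x_i)$, so $(f_x,f_{xx})$ is a $(-1)$-eigenvector of $\Phi_2(a_0,a_1)$, while area preservation gives $\det\Phi_2(a_0,a_1)=1$, forcing both eigenvalues to equal $-1$. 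The remaining task is to promote this double eigenvalue to $-\id$ by producing a second independent $(-1)$-eigenvector. The structure I would exploit is the reversibility of the inner flow about the point where the orbit meets $u=0$: there $f=0$, hence $f_{xx}=-\w f+\eta f^3=0$, and the coefficient $-\w+3\eta f^2$ of the inner variational equation is even about that point; the same considerations apply to $\Phi_2(a_2,a_3)$. Granting $\Phi_2(a_0,a_1)=\Phi_2(a_2,a_3)=-\id$, centrality of $\sigma=-\id$ collapses the inner factors, $(-\id)\,\Phi_1(a_1,a_2)\,(-\id)=\Phi_1(a_1,a_2)$, and combining with the outer identity yields (\ref{eq:rotidnl}).

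The hard part is precisely this last upgrade, $\Phi_2(a_0,a_1)=-\id$. Knowing only that both eigenvalues are $-1$ leaves open the possibility that the half-transit monodromy is a nontrivial Jordan block $-\id+N$ with $N\neq0$: the tangent mode $f_x$ sits at the edge of a spectral gap of the inner (Hill/Lam\'e-type) variational equation, and at an \emph{open} gap the companion Floquet solution generically carries a secular, linearly growing term, which is exactly what a residual shear $N$ would record. In the linear case of Example~\ref{ex:lin} this term is absent because the half-period of the harmonic oscillator gives $\Phi_2=-\id$ on the nose; once $\eta\neq0$ the gap opens, so the entire weight of the argument falls on showing that reversibility about the crossing $u=0$ (where $f$ and $f_{xx}$ both vanish) forces the even Floquet mode to reverse sign as well, thereby killing $N$. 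Making that reversibility argument quantitative — rather than settling for equality only up to the rotation count that enters $P$ in (\ref{eq:pandq}) — is where the genuine content of the lemma lies.
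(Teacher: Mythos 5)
Your decomposition is the paper's: the outer identity $\Phi_1(a_3,a_4)\Phi_1(a_1,a_2)=\Phi_1(a_0,a_4)$ via oddness of the outer field and the group property is exactly the last equality of (\ref{eq:negid}), and your observation that the tangent vector is a $(-1)$-eigenvector of $\Phi_2(a_0,a_1)$ while $\det\Phi_2(a_0,a_1)=1$ reproduces the paper's triangular form for the map $B$. But the step the lemma actually turns on --- $\Phi_2(a_0,a_1)=\Phi_2(a_2,a_3)=-\id$, i.e.\ the exclusion of the Jordan block $-(\id+N)$ --- you only \emph{grant}, and the repair you sketch would fail: the half-transit from $(u_0,y_0)$ to $(-u_0,-y_0)$ is not time-symmetric about the crossing $u=0$ (the sub-arc from $a_0$ to the crossing contains a turning point $y=0$; the sub-arc from the crossing to $a_1$ does not), so evenness of the coefficient $-\w+3\eta f^2$ about the crossing imposes no constraint on the transfer matrix over $(x_i,x_{i+1})$. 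The paper closes this step by a different, one-line argument: $B$ is orientation preserving, so $b_{2,2}<0$; and because the transit is exactly half the period of the closed inner orbit while, by the odd symmetry, the second half-transit is represented by the same matrix, the full-period variational monodromy equals $B^2$, which the paper asserts to be $\id$; this forces $b_{1,2}=0$, $b_{2,2}=-1$, hence $B=-\id$. The idea you are missing, relative to the paper, is this half-period relation $B^2=\id$.

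That said, your final paragraph puts its finger on the soft spot of precisely that assertion. Differentiating the closure relation $z(T(C);C)=z(0;C)$ with respect to the energy $C$ of (\ref{eq:innerC}) shows that the full-period monodromy sends the transverse initial vector $\partial_C z(0;C)$ to itself minus $T'(C)$ times the tangent vector: it is the identity if and only if the period function is critical at the orbit. For $\eta\neq0$ the inner period is non-constant (it diverges at the separatrix), so generically $B=-(\id+N)$ with $N\neq0$ --- exactly the secular shear you predicted --- and no reversibility argument can remove it, because it is genuinely there. Thus (\ref{eq:rotidnl}) holds as an exact matrix identity only for isochronous inner detours (e.g.\ $\eta=0$, which is in fact the case for the repeated detours in the paper's numerical example (\ref{eta1})); in general it must be read the way the paper reads (\ref{eq:rotid}), modulo $2\pi$, as an equality of net winding of tangent directions. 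That weaker statement is both true and sufficient: a shear along the tangent direction fixes that direction and moves no direction past it, so the winding, and hence the zero count $P$ of (\ref{eq:pandq}) used in Theorem \ref{th:main}, is the same as for $-\id$. If you replace your unproven upgrade by this winding statement, your argument closes and coincides in substance with what the lemma needs.
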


\begin{proof}
The exact same reasoning can be used to prove Lemma \ref{lem:main} (the nonlinear case), as was used in the example (the linear case). The only difference is that in order to determine the aggregate effect of the inner system on an initial tangent vector some more care must be taken with the matrices $\Phi_2(a_i, a_{i+1})$. Write the equation of variations to the outer system as
\begin{equation}\label{eq:outrsys}
\begin{pmatrix} \zeta \\ \xi \end{pmatrix} = \begin{pmatrix} 0 & 1 \\ -3u_1^2 + V- \w & 0 \end{pmatrix} \begin{pmatrix} \zeta \\ \xi \end{pmatrix}, \,\textrm{ when } x \in (x_{i+1},x_{i+2}) \cup (x_{i+3},x_{i+4}),
\end{equation}
where $u_1(x)$ is the equation satisfying the outer system with $\lim_{x\to\pm \infty} u_1(x) = \lim_{x\to\pm \infty} u_1'(x) = 0$. Write the equation of variations of the inner system as
\begin{equation}\label{eq:innersys}
\begin{pmatrix} \zeta \\ \xi \end{pmatrix} = \begin{pmatrix} 0 & 1 \\ 3\eta u_2^2 - \w & 0 \end{pmatrix} \begin{pmatrix} \zeta \\ \xi \end{pmatrix}, \,\textrm{ when } x \in (x_i,x_{i+1}) \cup (x_{i+2},x_{i+3}),
\end{equation}
where $u_2^2$ satisfies the appropriate conditions for the orbit. Now here is where the appropriate choices of the $x_i$'s must come into play.
In the linear case, the $x_i$'s were chosen so that the length of an interval in $U_I$ was $\frac{\pi}{\sqrt{\w}}$. Here we choose the $x_i$'s in
$U_I$ so that the length of an interval is such that we will return not only to the homoclinic orbit, but also if we leave the homoclinic orbit at the point $(u_0,y_0)$, we will return to the homoclinic orbit at the point
$(-u_0,-y_0)$. This allows us to determine the effect of the rotation (modulo $2\pi$) by the flow associated to the equation of variations along the partial orbit 
$(u_2(x),y_2(x)) $. In fact, we claim that the exact same is true as in the linear case. If $B$ is the linear map from the tangent space at $a_0$ and at $a_2$
to the tangent spaces at $a_1, a_3$ respectively, then $B = -\id$. To see this we will write out $B$ in a suitable basis $\vec{v}_1, \vec{v}_2$ of
the tangent space at $a_0$. One obvious choice of a basis vector is the tangent vector to the inner system. However given equation
(\ref{eq:innersys}), and the fact that along an orbit $(u_0,y_0) \to (-u_0,-y_0)$, this means that if $\vec{v}_1$ is the vector tangent to the inner orbit at $a_0$ (or $a_2$), then under $B$ $\vec{v}_1 \to -\vec{v}_1$. This means that $B$ has the form:
\begin{equation}
B = \begin{pmatrix} -1 & b_{1,2} \\ 0 & b_{2,2} \end{pmatrix},
\end{equation}
where $b_{i,j}$ are the coefficients of the linear combination of $\vec{v_1}$ and a suitably chosen $\vec{v}_2$. Now we appeal to two facts about the matrix $B$ which are evident from it's definition. The first is that $B$ must be orientation preserving. This is an elementary consequence due of the fact that it is the matrix of a flow (see for example \cite{perko01}). This means that $b_{2,2}$ must be negative. The second fact is that since $B$ corresponds to the matrix of the equation of variations traveling half way along the periodic orbit given by $(u_2(x),y_2(x))$ (because we chose our $x_i$'s so it would be that way), we must have that $B^2 = \id$. But this means that $b_{1,2} =0$ and $b_{2,2} = -1$ and the matrix $B$ itself $B = -\id$. Now we simply repeat the computation done in equation (\ref{eq:negid}) and the proof of Lemma \ref{lem:main} is complete.
\end{proof}

We are now ready to complete the proof of theorem \ref{th:main} .

\begin{pot}
Recall that $f_n$ as constructed is the solution to the ODE (\ref{stat2}). We let $P_n$ and $Q_n$ denote the count for $f_n$ of $P$ and $Q$ respectively. Lemma \ref{lem:main} shows that $P_{n-1} = P_n+2$ and it is clear that $Q_{n-1} = Q_{n}$, and so the quantity $P_n-Q_n$ is the same for all $f_n$, and in particular is equal to $P-Q$ for $f_0$. This completes the first part of the proof of theorem \ref{th:main}.

In order to determine the instability of the limit soliton we must proceed topologically using the methods developed in the proof of the main theorem of \cite{ckrtj88}.  

We have already discussed that in $H^1$, $f_n \to f$ a solution to
\begin{equation}\begin{array}{ccccc}
f_{xx} &=& (V-\w )f - f^3 & & x \in \R \setminus U_I, \\
f_{xx} &=& -\w f + \eta f^3 & & x \in U_I.
\end{array} \label{top2}\end{equation}

Following \cite{ckrtj88} we can associate to each solution $f_n$ a curve $\gamma_n(x)$, and to $f$ a curve $\gamma(x)$ in $\Lambda(2)$ the 
space of Lagrangian planes in $\R^4$. 

This is done as follows. Let $\Phi^n_{L_+}(x)$ denote the evolution operator of the ODE corresponding to 
the equation of variations of the ODE (\ref{stat2}) along the solution $f_n$. Likewise, let $\Phi_{L_+}(x)$ denote the evolution operator of the ODE 
corresponding to the equation of variation of the ODE (\ref{top2}) along the solution $\displaystyle f = \lim_{n \to \infty} f_n$.  Thus if 
$\displaystyle \begin{pmatrix} v_0 \\ w_0 \end{pmatrix}$ is a pair of initial conditions at $x = 0$, then for any $x \in \R$ we have that the 
evolution of $\displaystyle \begin{pmatrix} v_0 \\ w_0 \end{pmatrix}$ under the equation of variations along $f$, $f_n$ respectively will be 
given by $\displaystyle \begin{pmatrix}  \Phi_{L_+}(x) \cdot v_0 \\ \Phi_{L_+}(x)\cdot w_0 \end{pmatrix}$, respectively $\displaystyle \begin{pmatrix} 
\Phi_{L_+}^n(x) \cdot v_0 \\\Phi_{L_+}^n(x) \cdot w_0 \end{pmatrix}$.

We remark that the initial conditions $\displaystyle \begin{pmatrix} v_0 \\ w_0 \end{pmatrix} $ will be the same for each $f_n$ as well as for $f$. 

Again appealing to \cite{ckrtj88}, we can explicitly write the curves $\gamma_n(x)$ and $\gamma(x)$ in the space of Lagrangian planes $\Lambda(2) \approx U(2)/O(n)$. This is given by 
\begin{equation}
\gamma_n(x) = \begin{pmatrix} e^{i \theta_{1,n}(x)} & 0 \\ 0 & e^{i \theta_{2,n}(x)} \end{pmatrix},
\end{equation}
\noindent where
\begin{equation}
\theta_{1,n} = 2 \arctan(\frac{\Phi_{L_+}^n(x) \cdot w_0}{\Phi_{L_+}^n(x) \cdot v_0}) \textrm{ and, } \theta_{2,n} = -2 \arctan(\frac{f_n'(x)}{f_n(x)}),
\end{equation}
\noindent and
\begin{equation}
\gamma(x) = \begin{pmatrix} e^{i \theta_1(x)} & 0 \\ 0 & e^{i \theta_2(x)} \end{pmatrix},
\end{equation}
\noindent where
\begin{equation}
\theta_1 = 2 \arctan(\frac{\Phi_{L_+}(x) \cdot w_0}{\Phi_{L_+}(x) \cdot v_0}) \textrm{ and, } \theta_2 = -2 \arctan(\frac{f'(x)}{f(x)}).
\end{equation}
Now we observe that the curves $\gamma_n(x)$ and $\gamma(x)$ actually lie on a torus contained in $\Lambda(2)$. 

It was established in \cite{ckrtj88} that because $f_n$ and $f$ are solutions corresponding to homoclinic orbits in the phase plane of equations (\ref{stat2}),and (\ref{top2}), the curves $\gamma_n(x)$, and $\gamma(x)$ have well defined end points. Let $\mu_{-,n}$, $\mu_{+,n}$ be the endpoints in $\Lambda(2)$ of $\gamma_n(x)$. That is, let 
$$
\lim_{x \to -\infty} \gamma_n(x) = \mu_{-,n} \textrm{ and, } lim_{x \to \infty} \gamma_n(x) = \mu_{+,n},
$$
\noindent and set 
$$
\lim_{x \to -\infty} \gamma(x) = \mu_{-} \textrm{ and, } lim_{x \to \infty} \gamma(x) = \mu_{+}.
$$
Further because $f_n \to f$ and lemma \ref{lem:main}, we have that $\mu_{-,n} = \mu_{-}$, and $\mu_{+,n} = \mu_{+}$ for all $n$. In the previously introduced coordinates on the torus in $\Lambda(2)$ this means that the limits of $\theta_{1,n}$, $\theta_{2,n}$ are equal to the limits of $\theta_1(x)$ and $\theta_2(x)$ as $\x \to \pm \infty$. Moreover, it is easy to calculate explicitly that
$$\theta_{1}(x) \to 2 \arctan(\sqrt{V-\w}) := \theta_- \textrm{ and,} \quad \theta_2(x) \to - \theta_-$$
\noindent as $x \to -\infty$. 

\begin{figure}[tbhp!]
\center{
{\includegraphics[scale=0.7]{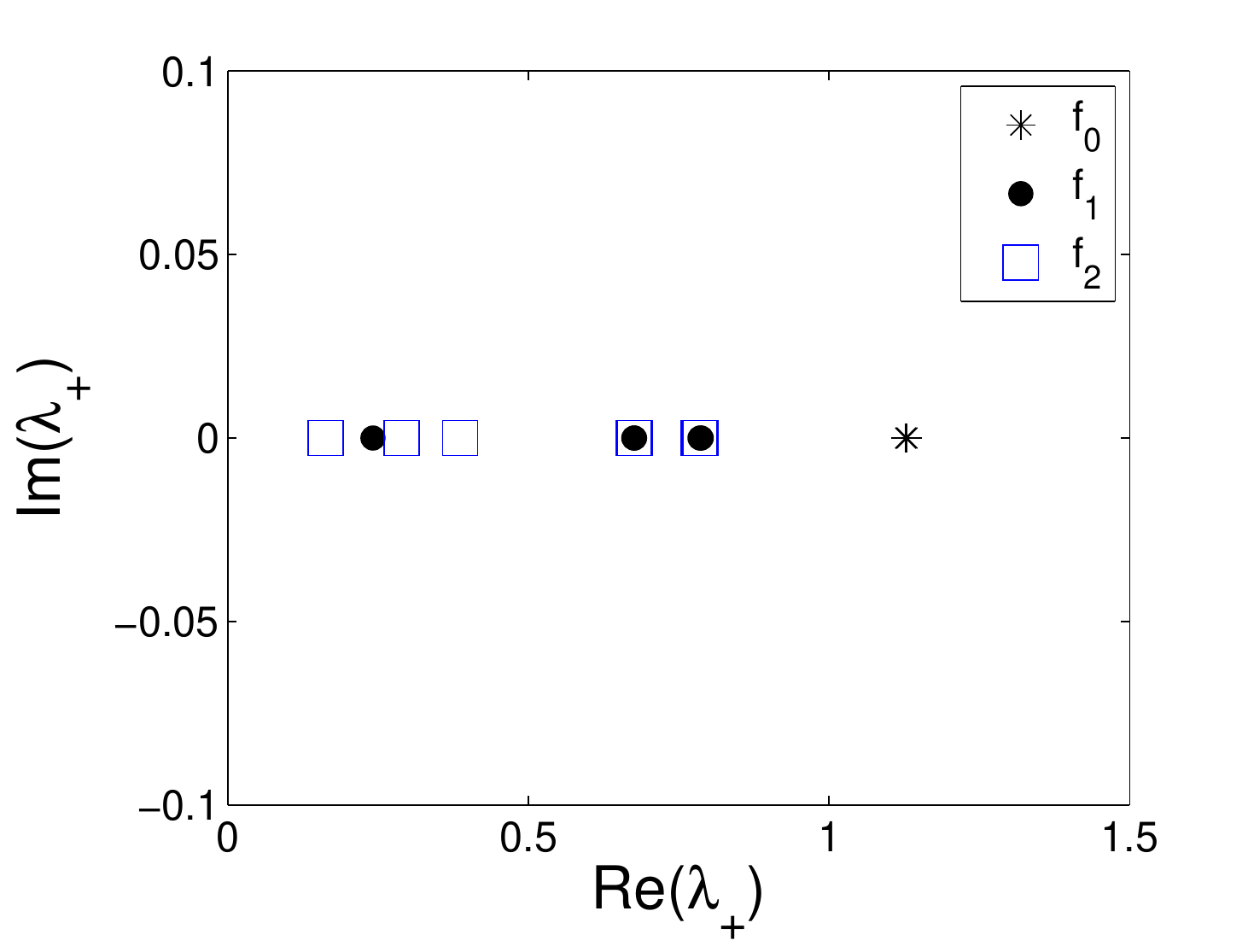}}
}
\caption{The positive eigenvalues of the operator $D_+$, i.e.\ $\lambda_+$. One symbol corresponds to two different, but very close eigenvalues.}
\label{fig:lambda_p}
\end{figure}

Still following the outline laid out in \cite{ckrtj88}, we denote by $\tilde{}$ the lift of the point (or curve) in the torus embedded in $\Lambda(2)$ to its corresponding point in the universal cover of the torus, $\R^2$. We will parametrize the universal covering of the torus in the obvious way. 
Without loss of generality, all of the $\mu_{-,n}$'s and $\mu_{-}$ can be lifted to the same point $\tilde{\mu}_- = (\theta_-, -\theta_-)$. It was shown in \cite{ckrtj88} that for each $n$, $\mu_{+,n}$ lifts to the point $\tilde{\mu}_{+,n} = ( \pm \theta_-, \theta_- + (P-Q) 2 \pi$. Thus lemma \ref{lem:main} implies that each $\mu_{+,n}$ lifts to the same point $\tilde{\mu}_{+,0} = (\pm \theta_-, \theta_-  + 2 \pi k )$, 

Next we observe that as $f_n \to f$ pointwise, $\gamma_n(x) \to \gamma(x)$ in the torus inside $\Lambda(2)$ pointwise, and the compactness of the torus and of $\Lambda(2)$, means that the end point $\mu_+$ must lift to the same point in the cover as $\mu_{+,0}$. Thus we have that $\tilde{\mu}_+ = (\pm \theta_- , \theta_- + 2 \pi k )$. 

Finally, it was shown in \cite{ckrtj88}, that if $|k| \neq 0, 1$, then the corresponding soliton underlying the curve $\gamma$ is unstable. This completes the proof of theorem \ref{th:main}
\end{pot}

\begin{rmk} The proof of theorem 2 may also be couched in the language of fixed end point homotopy classes. There are several ways to define such classes, see for example \cite{robsal93} or \cite{abbond01}, and the references therein. In this context theorem \ref{th:main} establishes that the fixed end-point homotopy class of the curve $\gamma$ is the same as those for $\gamma_n(x)$. An immediate consequence of this observation is that in $\Lambda(2)$, it is possible to deform the curves $\gamma$, and $\gamma_n$ all to the curve $\gamma_0$, in a continuous way. 
\end{rmk}

\begin{rmk}
\label{remark}
One can also consider so-called {\em surface} gap solitons, and obtain exactly the same results as for theorem \ref{th:main}. Mathematically, a surface gap soliton is the evolution of the solution to equation (\ref{top2}) but with the chosen intervals $U_I$ replaced by $U_S$, defined earlier. In this case, we consider a sequence of functions $f_n$ which are solutions to equation (\ref{stat2}), but with $U_n$ replaced by $S_n$. Then the functions $f_n \to f$, a solution to (\ref{top2}) with the appropriate replacements.  Lemma \ref{lem:main} holds, as well as theorem \ref{th:main}, and the techniques used in each will be identical. Thus if we start with an unstable solution, then the surface gap soliton that we obtain in  the limit will also be unstable. (See below for a further discussion of surface gap solitons). 
\end{rmk}

\begin{figure}[tbhp!]
\center{
\subfigure[]{\includegraphics[scale=0.6]{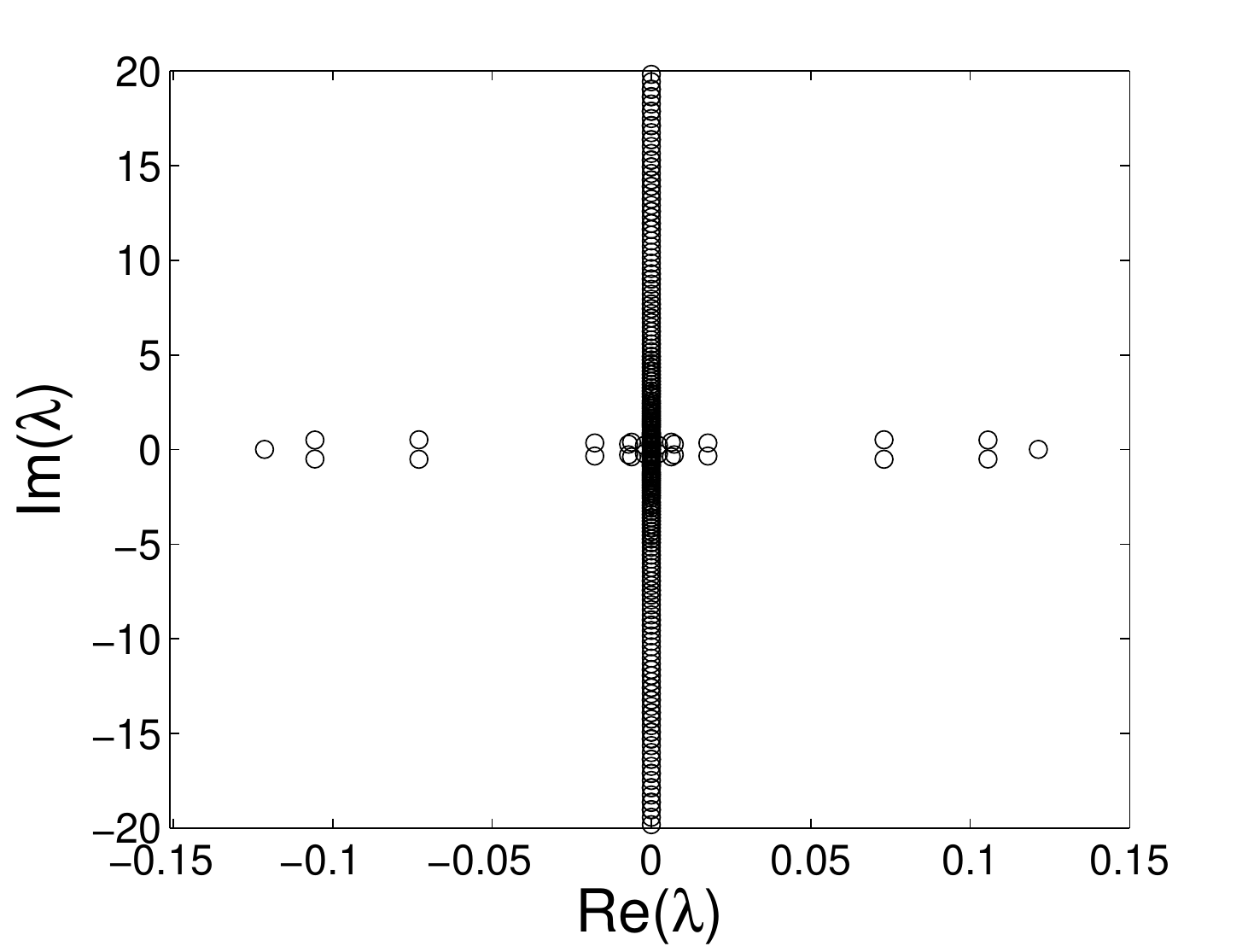}}
\subfigure[]{\includegraphics[scale=0.6]{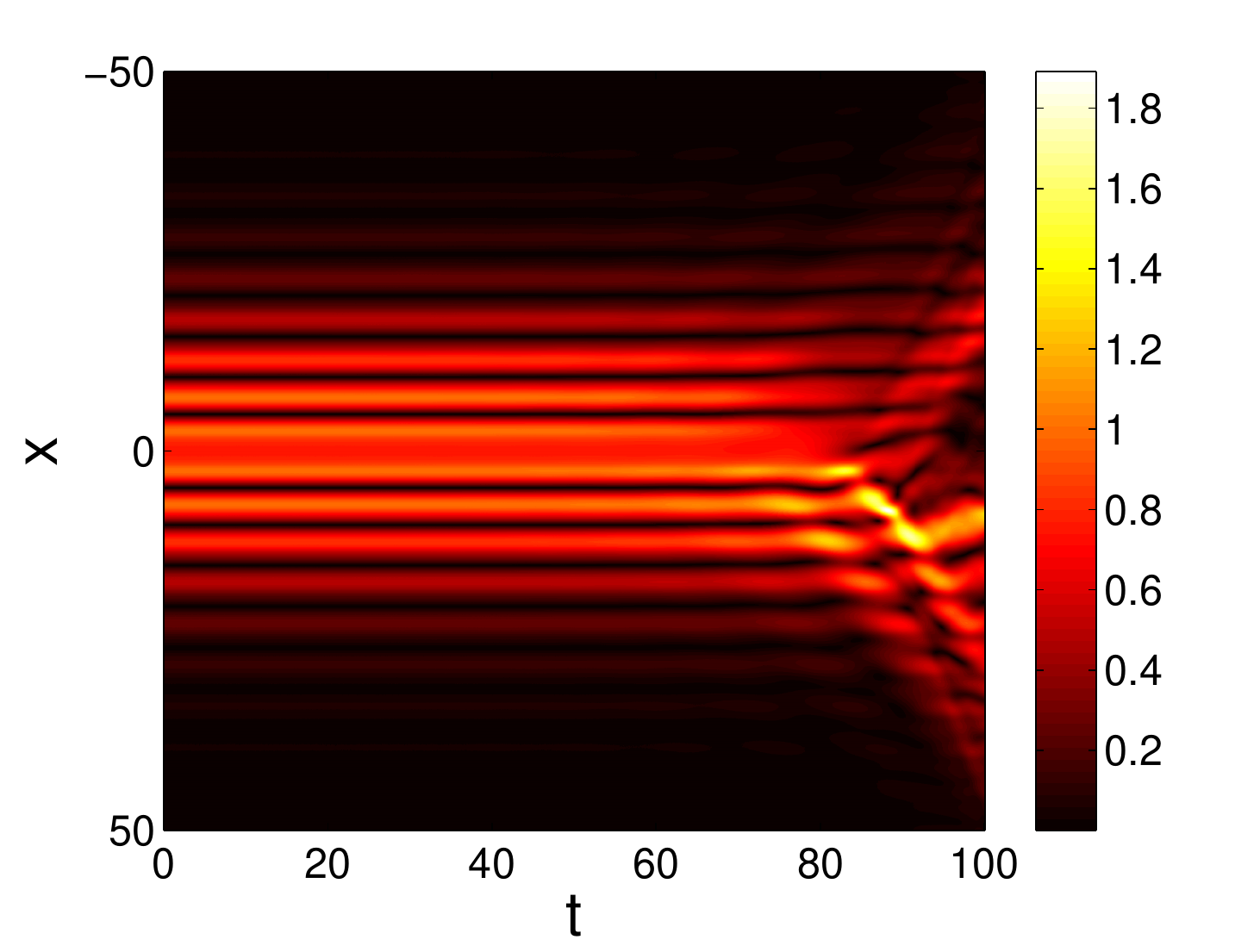}}
}
\caption{The eigenvalue structure in the complex plane (a) and the typical time evolution (b) of the gap soliton in Figure \ref{fig:gapsoliton}. Shown in (b) is the top view of $|\psi(x,t)|$ in the $(t,x)$-plane.}
\label{fig:lambda}
\end{figure}

\section{Numerical solutions and Discussion}
\label{numeric}

We have solved the time independent equation (\ref{stat1}) numerically, where we have used a spectral difference method to approximate the Laplacian $u_{xx}$. Once a solution is obtained, the corresponding eigenvalue problem (\ref{eq:linear}) is solved using a MATLAB routine. The time dependent equation (\ref{gov1}) is integrated numerically using a fourth-order Runge-Kutta method. Throughout the paper, we consider the parameter values
\[
V=1,\,\omega=0.5.
\]

\begin{figure}[tbhp!]
\center{
\subfigure[]{\includegraphics[scale=0.4]{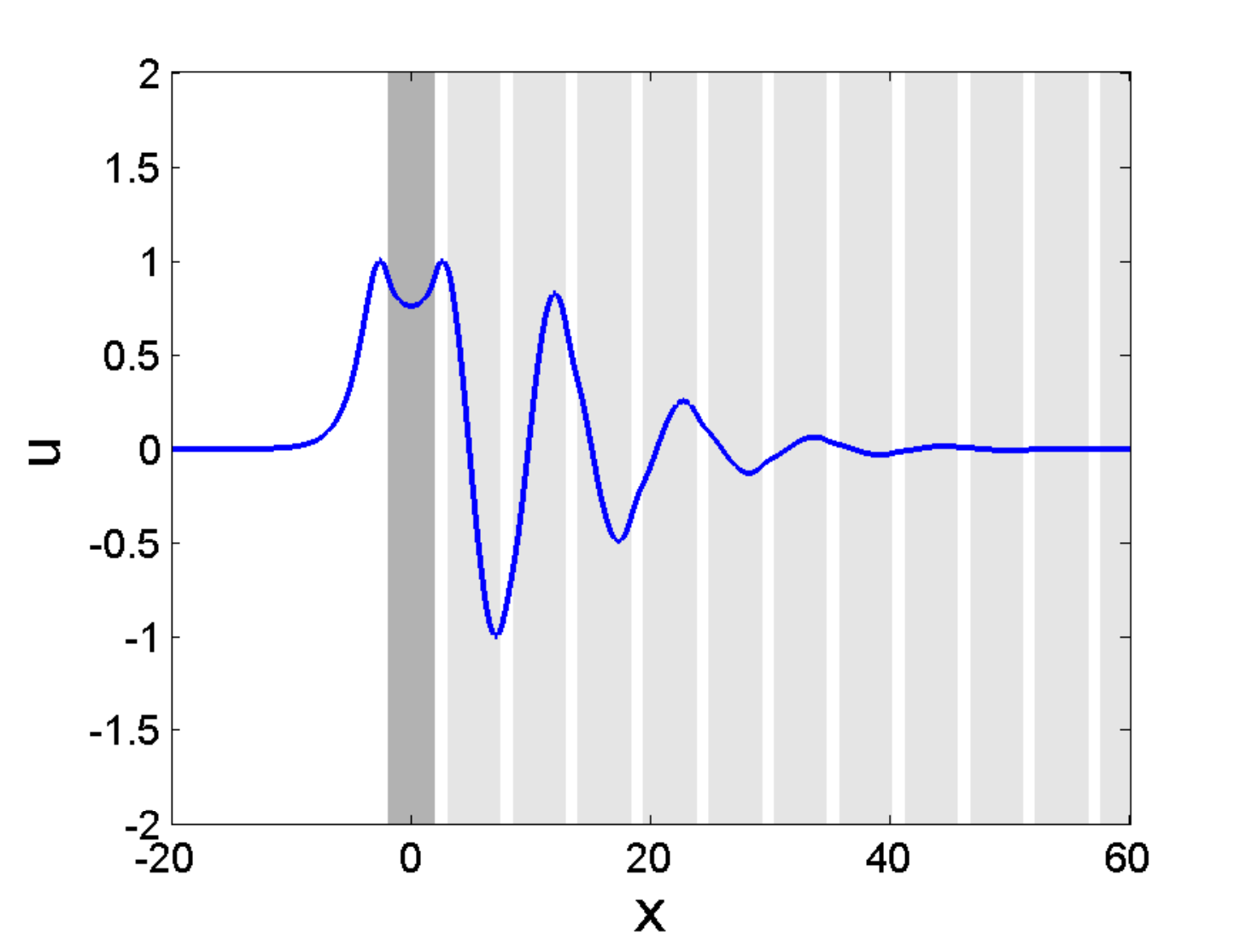}}
\subfigure[]{\includegraphics[scale=0.4]{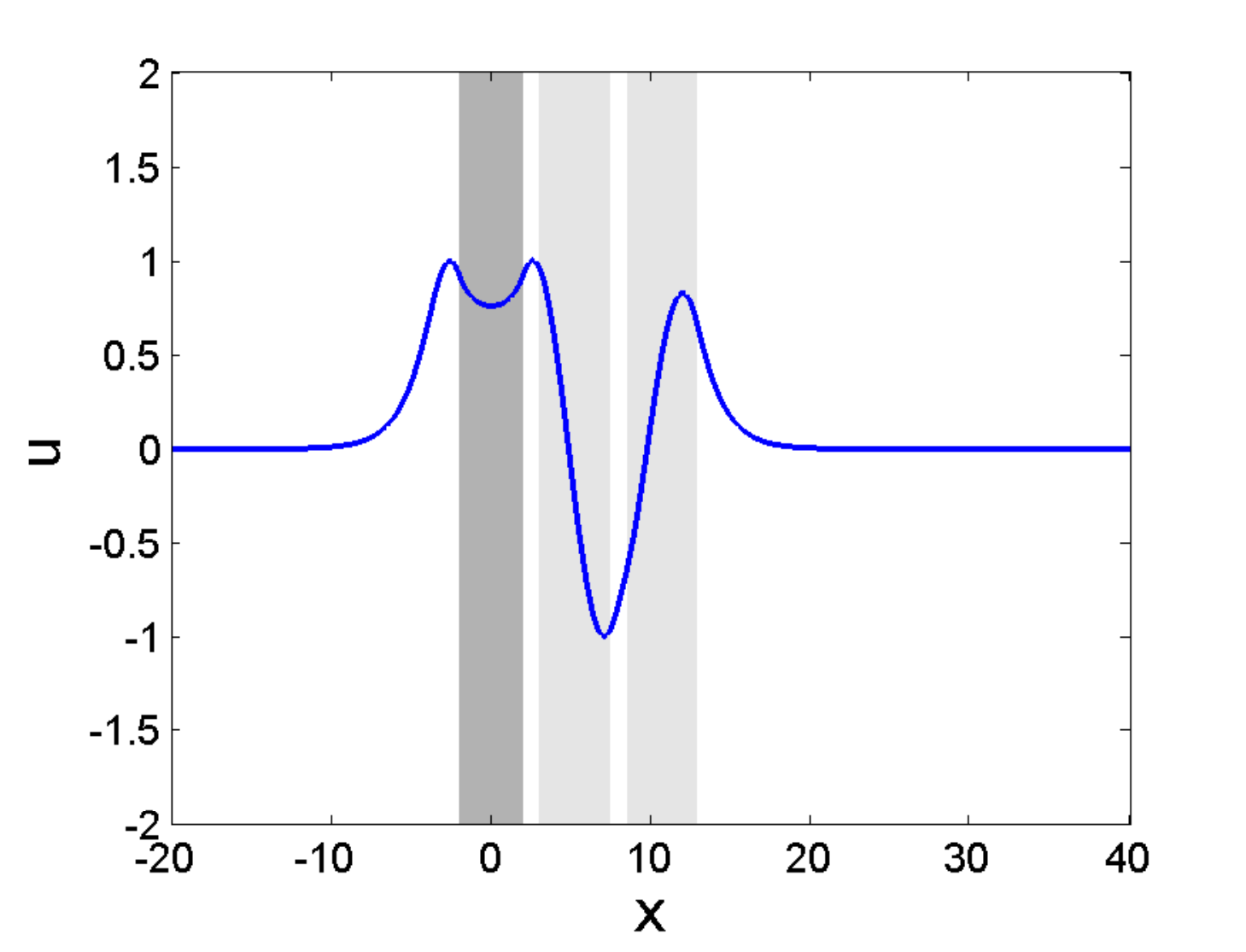}}
}
\caption{(a) A corresponding surface gap soliton of that in Figure \ref{fig:gapsoliton}. (b) An $f_1$ approximation of (a). }
\label{fig:surface}
\end{figure}

\begin{figure}[tbhp!]
\center{
\subfigure[]{\includegraphics[scale=0.6]{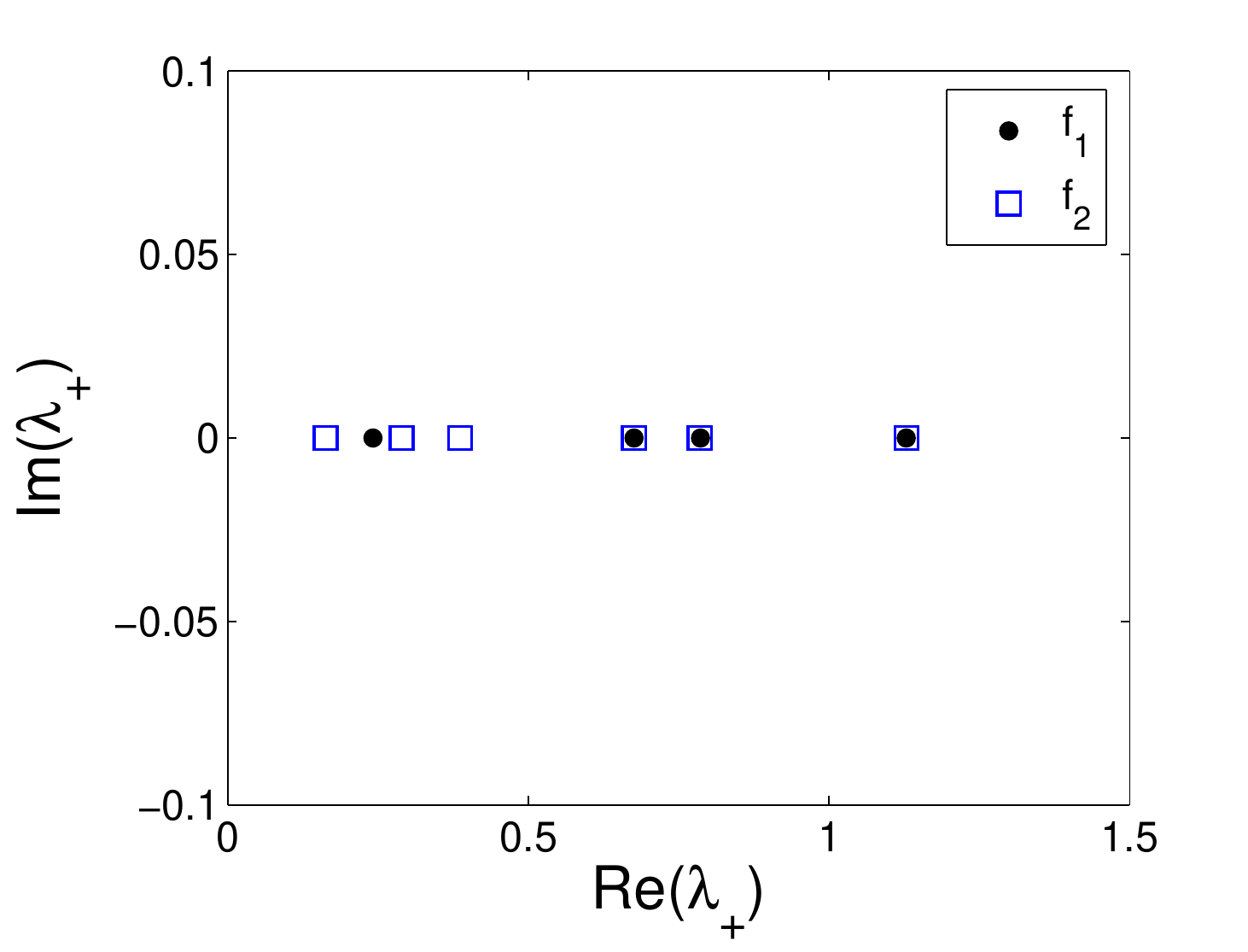}}
\subfigure[]{\includegraphics[scale=0.6]{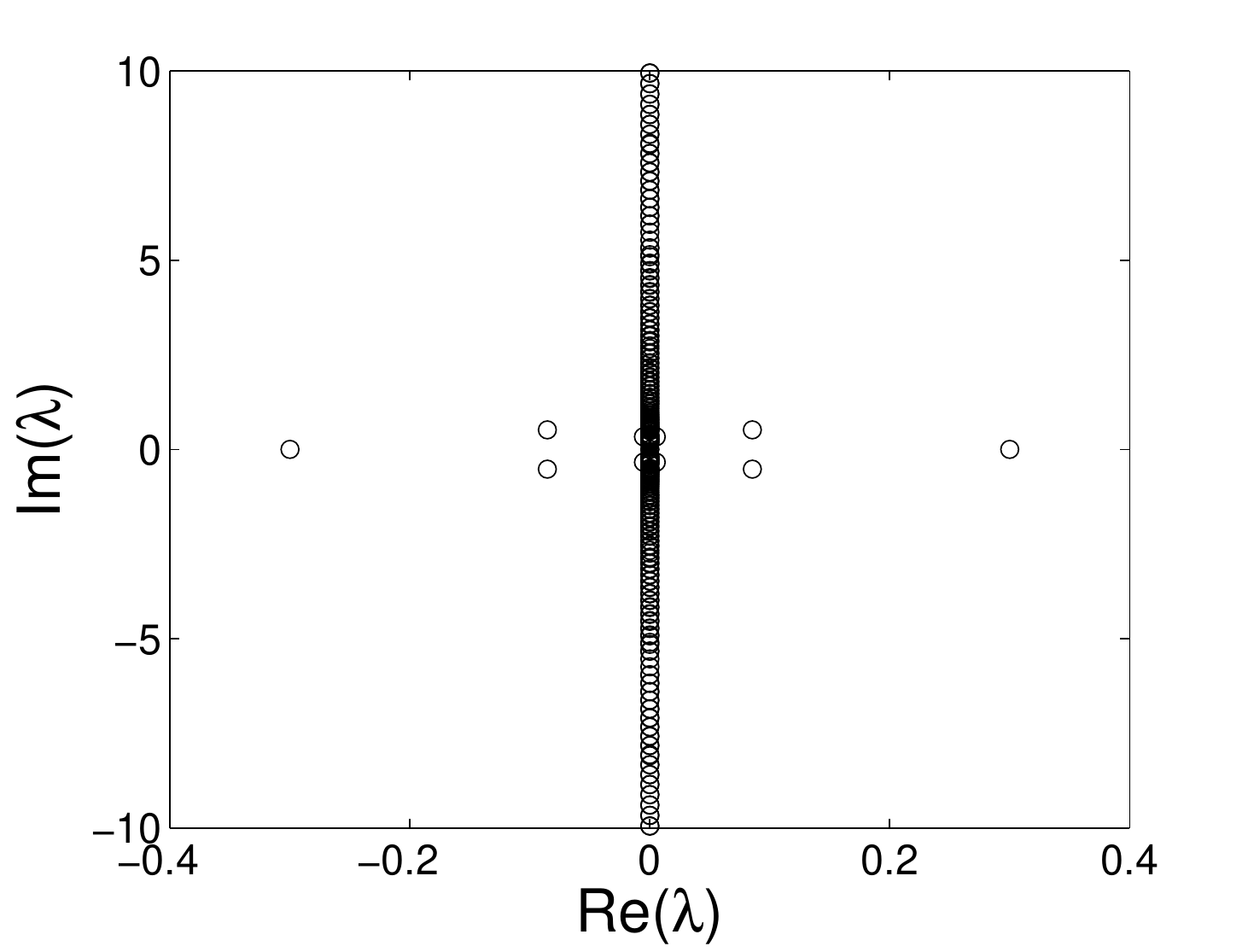}}
}
\caption{(a) The positive eigenvalues of $D_+$ for the approximations $f_1$ and $f_2$. Note that different from the plot in Figure \ref{fig:lambda_p}, here each symbol corresponds to one eigenvalue. (b) The eigenvalue structure of the surface gap soliton in Figure \ref{fig:surface}(a).}
\label{fig:surface_stability}
\end{figure}

First, we study Equation (\ref{gov1}) with
\begin{equation}
\eta=\left\{
\begin{array}{lll}
1,\quad x\in (-x_0,x_0),\\
0,\quad x\in (x_{2n+1},x_{2n+2}),\,(-x_{2n+2},-x_{2n+1}),
\end{array}
\right.
\label{eta1}
\end{equation}
where $x_0=2,\,x_{2n+1}-x_{2n}=1,\,x_{2n+2}-x_{2n+1}=\pi/\sqrt{\omega}$ and $ n=0,1,2,\dots$. A gap soliton for the above periodic inhomogeneity is depicted in Figure \ref{fig:gapsoliton}.

\begin{figure}[tbhp!]
\center{
\includegraphics[scale=0.6]{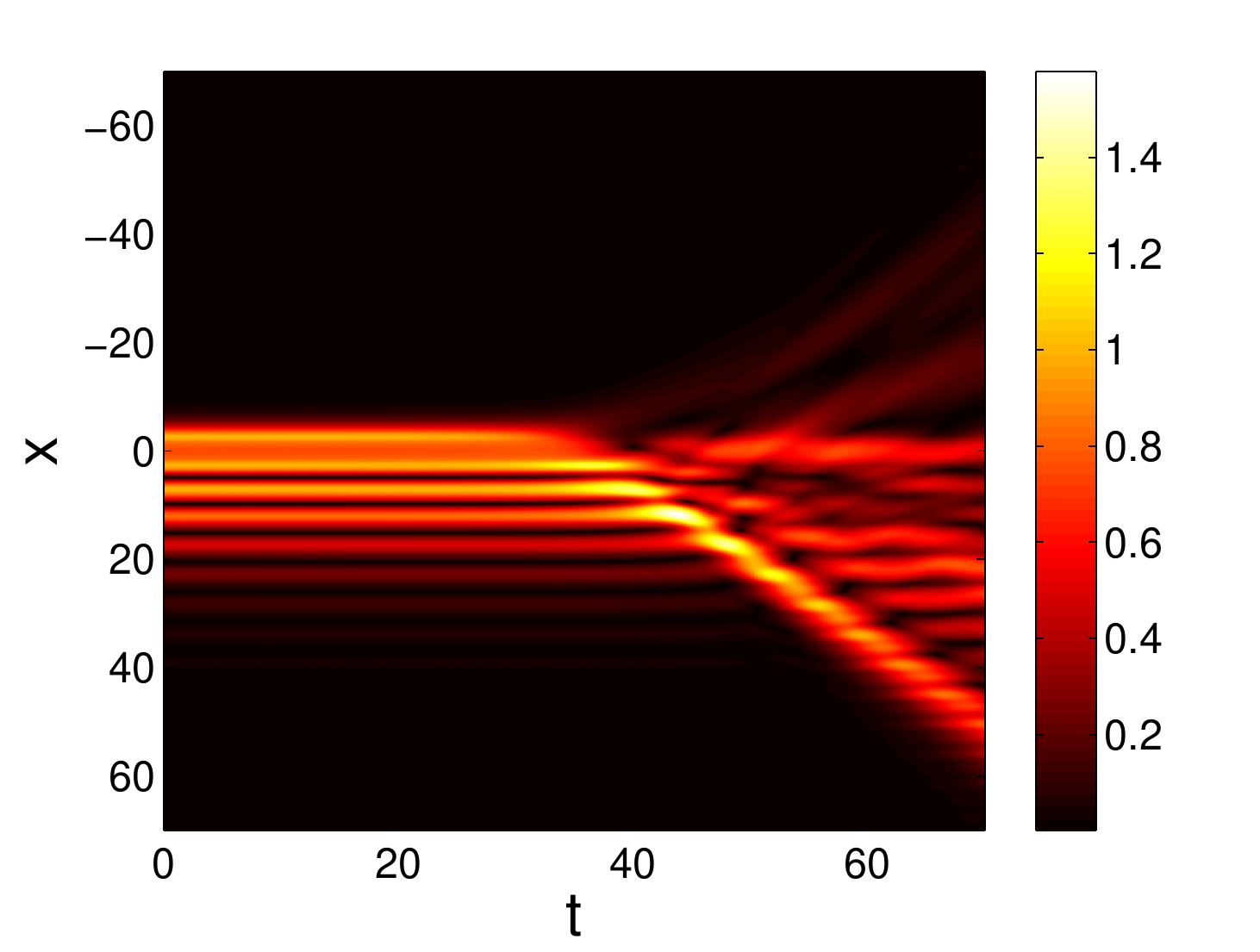}}
\caption{A time dynamics of the surface gap soliton in Figure \ref{fig:surface}. Shown is the top view of $|\psi(x,t)|$ in the $(t,x)$-plane.}
\label{fig:surface_evol}
\end{figure}

Theorem \ref{th:main} implies that to determine the instability of the gap soliton, it suffices to determine the instability of the corresponding solution $f_0$ shown in panel (a,b) of Figure \ref{fig:successivegapsolitons}. As discussed in \cite{rmckrtjhs10}, the positive solution $f_0$ is unstable, with $P=2$ and $Q=0$. We plot $\lambda_+$, i.e.\ the eigenvalues of the operator $D_+$, in Figure \ref{fig:lambda_p}. As shown in the figure, for $f_0$ there are two positive eigenvalues of $D_+$, i.e.\ $P=2$. The matrix $M$ in (\ref{eq:linear}) for the solution has one pair of real eigenvalues \cite{rmckrtjhs10} in agreement with Theorem \ref{th:ckrtj88}.

According to Lemma \ref{lem:main}, $f_n$ must have the same value of $P-Q$ as $f_0$. In the same figure, we obtain that $f_1$ and $f_2$ respectively has $P=6$ and $P=10$. Considering the fact from Figure \ref{fig:successivegapsolitons} that $f_1$ and $f_2$ respectively has $Q=4$ and $Q=8$, we indeed obtain that $P-Q=2$ for both $f_1$ and $f_2$. Using the lemma, one will obtain that $P-Q=2$ for $\lim_{n\to\infty}f_n$. Using Theorem \ref{th:main}, one can conclude that the gap soliton in Figure \ref{fig:gapsoliton} will be unstable. We depict in Figure \ref{fig:lambda}(a) the eigenvalue structure of the gap soliton in the complex plane. When the corresponding $f_0$ of the gap soliton has one pair of real eigenvalues \cite{rmckrtjhs10}, the gap soliton has several pairs of unstable eigenvalues. Nonetheless, one can easily notice that there is only one pair of real eigenvalues, similarly to $f_0$ \cite{rmckrtjhs10}. The time dynamics of the solution is shown in panel (b) of the same figure, where a typical instability is in the form of the dissociation of the solution.

Next, we study Equation (\ref{gov1}) with
\begin{equation}
\eta=\left\{
\begin{array}{lll}
1,\quad x\in (-x_0,x_0),\\
0,\quad x\in (x_{2n+1},x_{2n+2}),
\end{array}
\right.
\label{eta2}
\end{equation}
for the same values of $x_n$, $n=0,1,2,\dots$, as above. The only difference with $\eta$ defined in Equation (\ref{eta1}) is that the present periodic inhomogeneity only occupies the $x>0$-region. In this case, we will have surface gap solitons sitting at the interface between a homogeneous and a periodically inhomogeneous region. A corresponding surface gap soliton of that in Figure \ref{fig:gapsoliton} and one of its successive approximations $f_1$ are shown in Figure \ref{fig:surface}. The $f_0$ approximation of the soliton is nothing else but that shown in Figure \ref{fig:successivegapsolitons}(a).

Using Theorem \ref{th:main} and Remark \ref{remark}, one can expect that in this case $P-Q=2$. Plotted in Figure \ref{fig:surface_stability}(a) is the positive eigenvalues of $D_+$, i.e.\ $\lambda_+$. The positive eigenvalue $\lambda_+$ of $f_0$ is the same as before, which is $P=2$. For $f_1$ and $f_2$, from Figure \ref{fig:surface_stability}(a) one can deduce that $P=4$ and $P=6$, respectively, with $Q=2$ and $Q=4$. Hence, the limiting quantity $P-Q$ of the surface gap soliton is the same as that of the gap soliton in Figure \ref{fig:gapsoliton}, i.e.\ $P-Q=2$. As expected, shown in Figure \ref{fig:surface_stability}(b) is the eigenvalue structure of the gap soliton, where one also obtains one pair of real eigenvalues similarly to the stability the gap soliton depicted in Figure \ref{fig:lambda}(a). We plot the time dynamics of the surface gap soliton in Figure \ref{fig:surface_evol}.

\section{Conclusion}

We have considered a nonlinear Schr\"odinger equation with periodic inhomogeneity, both in the infinite and semi-infinite domain. Specifically we have studied the instability of gap solitons admitted by the system. We have established a proof that if the periodic inhomogeneity is arranged in a particular way, such that parts of the solutions belonging to closed trajectories in the phase-space have length half the period of the trajectories, then the solitons inherits the instability of the corresponding solution with finite inhomogeneity.  The analytical study is based on the application of a topological argument developed in \cite{ckrtj88}.

It is natural to extend the study to the case when the solutions are localized, but do not tend to the uniform zero solution (see, e.g., \cite{komi06_2}). The (in)stability of such solitons is proposed to be studied in the future using analytical methods similar to that presented herein.

\bibliographystyle{elsarticle-num}

\end{document}